\documentclass[11pt,a4paper]{article}
\usepackage[margin=26mm]{geometry}

\usepackage{algorithm,algpseudocode}
\usepackage{amsmath,amssymb,amsfonts,amsthm,mathrsfs,url,color}
\usepackage{authblk}
\usepackage{cite}
\usepackage{comment}
\usepackage{enumitem}
\usepackage{fancybox}
\usepackage{fancyhdr}
\usepackage{fbox}
\usepackage{graphicx}
\usepackage{import}
\usepackage{textcomp}
\usepackage{thm-restate}
\usepackage{titling}
\usepackage[nottoc,notlot,notlof]{tocbibind}
\usepackage{xcolor}
\usepackage{xspace}

\usepackage{hyperref,multirow}
\hypersetup{
colorlinks = true,
citecolor = violet,
linkcolor = violet,
}

\declaretheoremstyle[
  headfont=\bfseries,
  bodyfont=\itshape,
  notefont=\bfseries,
  spaceabove=7pt,
  spacebelow=7pt,
]{mdef}
\declaretheorem[style=mdef,name=Research Question]{rquestion}
\declaretheorem[style=mdef,name=Theorem,numberwithin=section]{theorem}
\declaretheorem[style=mdef,name=Conjecture,numberwithin=section,sibling=theorem]{conjecture}
\declaretheorem[style=mdef,name=Example,numberwithin=section,sibling=theorem]{example}
\declaretheorem[style=mdef,name=Definition,numberwithin=section,sibling=theorem]{definition}
\declaretheorem[style=mdef,name=Lemma,numberwithin=section,sibling=theorem]{lemma}
\declaretheorem[style=mdef,name=Proposition,numberwithin=section,sibling=theorem]{proposition}
\declaretheorem[style=mdef,name=Observation,numberwithin=section,sibling=theorem]{observation}

\declaretheoremstyle[
  headfont=\bfseries,
  bodyfont=\itshape,
  notefont=\bfseries,
  spaceabove=0pt,
  spacebelow=0pt,
]{claimsty}

\DeclareMathOperator{\csp}{CSP}
\DeclareMathOperator{\Csp}{CSP}

\DeclareMathOperator{\Aut}{Aut}

\DeclareMathOperator{\proj}{proj}

\DeclareMathOperator{\V}{\mathcal{V}}
\DeclareMathOperator{\constraints}{\mathcal{C}}
\DeclareMathOperator{\instance}{\mathcal{I}}

\newcommand{\sA}{\mathbb A}
\newcommand{\sB}{\mathbb B}

\newcommand{\sD}{\mathbb D}
\newcommand{\sI}{\mathbb I}

\newcommand{\sX}{\mathbb X}

\newcommand{\efpp}{efpp}
\newcommand{\true}{\text{TRUE}}

\newcommand{\unreach}{Graph Unreachability\xspace}
\newcommand{\reach}{Graph Reachability\xspace}
\newcommand{\diunreach}{Digraph Unreachability\xspace}
\newcommand{\direach}{Digraph Reachability\xspace}

\newcommand{\tuple}[1]{\mathbf{#1}}
\newcommand{\tup}[1]{\mathbf{#1}}
\newcommand{\rel}[1]{\mathbb{#1}}

\DeclareMathOperator{\scope}{S}
\DeclareMathOperator{\Var}{\text{Var}}

\providecommand{\keywords}[1]
{
  \small
  \textit{\textit{Keywords}:} #1
}
\renewcommand{\today}{\ifcase \month \or January\or February\or March\or %
  April\or May \or June\or July\or August\or September\or October\or November\or %
December\fi, \number \year}

\title{Constraint Satisfaction Problems over Finitely Bounded Homogeneous Structures: a Dichotomy between FO and L-hard\thanks{Funded by the National Science Centre, Poland under the Weave-UNISONO call in the Weave programme, project no. 2021/03/Y/ST6/00171}} 
\date{\today}
\author{Leonid Dorochko}
\author{Micha{\l} Wrona}
\affil[1]{Faculty of Mathematics and Computer Science, Jagiellonian University, Poland}

\pretitle{\begin{center}\LARGE\bfseries}
\posttitle{\par\end{center}\vskip 0.5em}

\begin{document}

\maketitle
\begin{abstract}
Feder-Vardi conjecture, which proposed that every finite-domain Constraint Satisfaction Problem (CSP) is either in P or it is NP-complete, has been solved independently by Bulatov and Zhuk almost ten years ago. Bodirsky-Pinsker conjecture which states a similar dichotomy for  countably infinite first-order reducts of finitely bounded homogeneous structures is wide open.  

In this paper, we prove that CSPs over first-order  
expansions of finitely bounded homogeneous model-complete cores are either first-order definable (and hence in non-uniform AC$^0$) or L-hard under first-order reduction. 
It is arguably the most general complexity dichotomy when it comes to the scope of structures within Bodirsky-Pinsker conjecture. 
Our strategy is that we first give a new proof of Larose-Tesson theorem, which provides a similar dichotomy over finite structures, and then generalize that new proof to  infinite structures.    
\end{abstract}
\keywords{constraint satisfaction problems, finitely bounded homogeneous structure, infinite structures, first-order definability, L-hardness, computational complexity, dichotomies}

\section{Introduction}
The goal of computational complexity is to describe the amount of certain resources, such as time and space, required to solve a computational problem under consideration. Instead of working on a single problem, it is reasonable to work on the whole family of problems admitting a uniform formulation.
A general formalism that allows the formulation of many natural computational problems is the Constraint Satisfaction Problem $\csp(\sA)$, parametrized by a relational structure $\sA$. Every $\sA$ contributes to a different computational problem, e.g., a Boolean satisfiability problem, graph coloring, or solving equations over a finite field. An instance of $\csp(\sA)$ is a set of constraints built out of variables and relations in $\sA$. The goal is to satisfy all constraints simultaneously by the same assignment to variables. Feder-Vardi conjecture~\cite{FederVardi}, settled in 2017 independently by Bulatov~\cite{Bulatov:2017} and Zhuk~\cite{Zhuk:2017, Zhuk:2020} states that whenever $\sA$ is finite, $\csp(\sA)$ is either in P or it is NP-complete. 
That result stands in opposition to Ladner's theorem~\cite{Ladner} on the existence of NP-intermediate problems on the condition that P$\subsetneq$NP. Although obtaining the dichotomy was the original motivation, the algorithms for all polynomial-time $\csp(\sA)$ are arguably of no less importance. They generalize Gaussian elimination~\cite{IdziakMMVW10} and local-consistency techniques~\cite{BartoKozikBoundedWidth}, which are ubiquitous in constraint solving. Thus, they provide the same way of solving many natural problems in P. On the other hand, a simple algebraic condition provides a polynomial time gadget reduction from monotone 1-in-3-SAT to any NP-complete $\csp(\sA)$~\cite{BJK05}.

Although the holy grail of the CSP community has been achieved, it is not the end of the story.
Indeed, there is a number of active research directions generalizing $\csp(\sA)$ over finite structures $\sA$. In particular, Bodirsky-Pinsker 
 conjecture~\cite{BPP-projective-homomorphisms} extends Feder-Vardi conjecture to some well-behaved infinite structures $\sA$ which cover a number of problems  not expressible as $\csp(\sA)$
if $\sA$ is finite, e.g., in temporal reasoning~\cite{temporalCSP,MakarychevMZ15}, phylogenetic reconstruction~\cite{PhylogenyCSP,ChatziafratisM23} and graph theory~\cite{Schaefer-Graphs,HomogeneousGraphs}. The well-behaved structures $\sA$ are described in two steps. 
We start with a Fra\"{i}ss\'e class~\cite{Hodges} of finite structures $\mathcal{C}$ given by a finite number of finite structures (obstructions) which are forbidden to embed into the members of $\mathcal{C}$. The Fra\"{i}ss\'e
limit of $\mathcal{C}$ is the basic structure $\sB$. By  Fra\"{i}ss\'{e} theorem, the structure $\sB$ is homogeneous\footnote{All notions are carefully defined in Section~\ref{sect:prelim}.} and by the choice of $\mathcal{C}$ it is finitely bounded. Then, one considers and asks about the complexity of $\csp(\sA)$ for structures $\mathbb{A}$ first-order definable in $\mathbb{B}$, known as first-order reducts of $\mathbb{B}$.

\begin{conjecture} \textbf{(Bodirsky-Pinsker)}
\label{conj:BodirskyPinsker}
For every finite-signature first-order reduct $\sA$
of a finitely boun\-ded homogeneous structure $\sB$, $\csp(\sA)$ is in P or it is NP-complete.
\end{conjecture}

Although not resolved in full generality, the conjecture has been confirmed in the number of particular cases such as all first-order reducts of $(\mathbb{Q},<)$, known also as temporal (constraint) languages~\cite{temporalCSP}, first-order reducts of the so-called rooted triple relation, known also as phylogeny languages~\cite{PhylogenyCSP}, or all first-order reducts of all homogeneous graphs~\cite{Schaefer-Graphs,HomogeneousGraphs}. Another classification considers CSPs that are expressible in the logic MMSNP~\cite{MMSNP-journal}. 
Despite these attempts, it is fair to say that the general conjecture is wide open. 

At the same time, the boundary between P and NP-complete has not been the only one studied for finite-domain CSPs~\cite{DualitiesCSP}. In~\cite{LaroseT09TCS}, one can find a number of not yet resolved conjectures on which CSPs are, e.g., in L (logarithmic space) or NL (non-deterministic logarithmic space). The questions have not yet been answered despite certain partial results in the literature; see, e.g.,~\cite{BartoKW12,EgriHLR14,DalmauEHLR15,CarvalhoDK08}.
Furthermore,~\cite{LaroseT09TCS} contains the following dichotomy built on the characterization of first-order definable finite-domain CSPs from~\cite{FO_char1}.

\begin{restatable}{theorem}{larosetesson}
\label{thm:LaroseTesson}
\textbf{(Larose-Tesson)}
Let $\sA$ be a finite structure over a finite signature. Then one of the following holds. 
\begin{itemize}
\item $\csp(\sA)$ is L-hard (under first-order reduction).
\item  $\csp(\sA)$ is first-order definable, and hence  in non-uniform $\text{AC}^0$.
\end{itemize}
\end{restatable}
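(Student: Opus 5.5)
We plan to prove Theorem~\ref{thm:LaroseTesson} by reducing to the core and then splitting on whether the core admits a simple ``bounded'' structure of obstructions. First, replace $\sA$ by its core $\sA'$: $\csp(\sA)=\csp(\sA')$. Adding the singleton unary relations $\{a\}$ for $a\in A'$ does not change the complexity class up to first-order reductions. Hence we may assume that $\sA$ is a core containing all constants. The two alternatives are then governed by the existence of \emph{finite duality}, i.e.\ a finite set $\mathcal{F}$ of finite structures such that an instance $\sI$ maps to $\sA$ iff no member of $\mathcal{F}$ maps to $\sI$. By the characterization of~\cite{FO_char1}, finite duality is equivalent to first-order definability. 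The first-order sentence is the negation of the disjunction of the canonical conjunctive queries of members of $\mathcal{F}$. Finite duality is in turn equivalent (for cores) to all obstructions being of bounded size with tree-like structure. So the goal is: if $\sA$ has no finite duality, exhibit a first-order reduction from an L-hard problem, which we take to be \unreach (or its directed/path variant on ordered paths).

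The key step is a combinatorial characterization of failure of finite duality. Consider the ``critical obstructions'': minimal structures $\sI$ that do not map to $\sA$ such that every proper substructure does. Without finite duality there are arbitrarily large critical obstructions. By the Nešetřil--Tardif/Larose--Loten--Tardif analysis, one can moreover take them to be of tree-shape with arbitrarily long paths. More concretely, we aim to show the following. There exist elements or relations in $\sA$ realizing a \emph{pp-definable} binary relation $R$ together with unary pp-definable sets $U$ and $V$ with the following property. In the digraph $(A,R)$, the forward closure of $U$ never meets $V$ in one step but does meet it along arbitrarily long chains, while staying within a finite structure. Equivalently, there are $a,b$ and a pp-definable $R$ such that the $R$-walks from $U$ to $V$ are unbounded in length. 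The cleanest version I would aim for is via the ``long-path'' argument. Take a long critical obstruction and cut it along a path of constraints. By pigeonhole on the finitely many possible sets of values (the sets of homomorphisms restricted to the cut variables), we obtain a pp-definable relation $R$ on $A^k$ and subsets $S_0\subsetneq S_1\subsetneq\dots$ that propagate without stabilizing quickly. A repeated state then gives a genuine ``cycle-free chain'' gadget.

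With such a gadget, the reduction from \unreach (given a graph with designated $s,t$) works as follows. Each vertex $v$ becomes a copy of the $k$ gadget variables. Each edge becomes a pp-definition of a symmetrized version of $R$, and $s$ and $t$ are constrained by $U$ and $V$ respectively. The instance is satisfiable iff $t$ is unreachable from $s$. Since pp-definitions and constants are local replacements, this is a first-order (even quantifier-free) reduction. Symmetry is needed because \unreach is undirected. I would obtain it by working with $R\circ R^{-1}$-type compositions, or else reduce from directed path unreachability on a path-like input, which is still L-hard under FO reductions.

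The main obstacle is the middle step: deriving from ``no finite duality'' a pp-definable relation whose propagation is genuinely unbounded \emph{and} which can be made to encode undirected reachability faithfully. False positives from cycles in the input graph must not make the instance unsatisfiable. I would first try the algebraic route: if no finite duality, then by~\cite{LaroseT09TCS}-style results the variety of $\sA$ omits something, or $\sA$ fails to have a \emph{1-tolerant}/absorbing structure. That route yields a nontrivial pp-definable equivalence-like relation or an order-like relation, either of which encodes reachability. If that is too heavy, I would fall back on the purely combinatorial tree-obstruction argument sketched above.
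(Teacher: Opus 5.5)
There is a genuine gap. The step that carries the whole proof, extracting a hardness gadget from the failure of finite duality, is never carried out, and the combinatorial route you sketch rests on a false premise. Failure of finite duality does not yield long tree-shaped critical obstructions. For the core $K_2$, every tree maps to it and the critical obstructions are the odd cycles. More generally, whenever arc consistency does not solve $\csp(\sA)$, some unsatisfiable instances have no tree-shaped refutation at all.

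So you first need a separate argument that failure of width $1$ already forces L-hardness. The paper gets this by splitting on the existence of a \emph{balanced implication} rather than on finite duality. A balanced implication is an \efpp-formula $\phi(x,y)$ with $\proj_x\phi^{\sA}=\proj_y\phi^{\sA}$, together with an \efpp-definable $C$ strictly inside this projection, such that $C$ at $x$ forces $C$ at $y$ and every element of $C$ at $y$ is reached from $C$ at $x$. For $K_2$ with constants, $E(x,z)\wedge E(z,y)$ with $C=\{0\}$ is one. If no balanced implication exists, take a non-trivial $(1)$-minimal instance and its digraph on pairs $(C,x)$ with $C\subsetneq\proj_x$, whose arcs are constraints acting as $(C,x,D,y)$-implications. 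This digraph is acyclic, since a cycle composes to a balanced implication. Restricting a sink $(B,x)$ to $B$ then preserves $(1)$-minimality, so $(1)$-minimality solves the CSP (Proposition~\ref{prop:noimp1min}). Only at that point do tree-shaped refutations exist, as tree \efpp-formulae read off from arc consistency.

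Your pigeonhole also targets the wrong invariant. A strictly increasing chain $S_0\subsetneq S_1\subsetneq\cdots$ of subsets of a finite set is short. ``Unboundedly long $R$-walks from $U$ to $V$'' in a finite digraph does not define a gadget either. The paper works along a long formula path of a minimal refuting tree formula. It compares the projection $C_i$ onto the $i$-th root with the projection $C_i'$ obtained after deleting the deepest atom. Minimality gives $C_i\subsetneq C_i'$, and a repeated pair $(C_i,C_i')$ at two distinct roots is a balanced implication. If no such long path exists, the depth is bounded and there are only finitely many obstructions. Your algebraic fallback is only a pointer to the original Larose--Tesson argument, not a proof.

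The reduction step has two further problems. First, ``pp-definitions are local replacements, hence FO'' is false at exactly this boundary once equality is allowed. $\csp(\{0,1\};\{(0)\},\{(1)\})$ is first-order definable, yet adding the pp-definable equality gives \unreach. You must keep every gadget equality-free (Lemma~\ref{lem:efpp-FO}).

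Second, symmetrizing with $R\circ R^{-1}$ destroys the separation you need. The set $C$ is closed forward but not backward, so an element of $C$ and an element outside $C$ with a common successor become related. The paper instead composes the balanced implication $M\cdot|D|$ times, where $M$ is the lcm of the cycle lengths of its digraph and $D$ is its projection. This makes every strongly connected component complete with loops. Then $\psi(x,y)\wedge\psi(y,x)$ is an equivalence relation on a subset of $A$ with at least two classes: a sink component inside $C$ and a source component in $D\setminus C$. Two constants from different classes then encode \unreach. Your directed alternative (out-degree-one reachability) might also work, but it needs the same component analysis to make cycles and merges in the input graph harmless.
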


Recall that a canonical example of an L-hard problem is \reach, where given an undirected graph $G$ and two distinguished vertices $s$ and $t$, we ask whether there is a path from $s$ to $t$~\cite{Reingold08}. More appropriate in the context of constraint satisfaction problems is the \unreach problem, the complement of the \reach problem, which is L-hard but can actually be expressed as $\csp(\{0,1 \}; \{ (0) \}, \{ (1) \}, \{ (0,0), (1,1) \})$. A similar problem, but for directed graphs: \direach is easily NL-hard. By the Immerman-Szelepcsényi theorem~\cite{Immerman88,Szelepcsenyi88}, the \diunreach is also NL-hard. The latter problem is equivalent to   $\csp(\{0,1 \}; \{ (0) \}, \{ (1) \}, \{ (0,0), (1,1), (0,1) \})$.
 \emph{Non-uniform $\text{AC}^0$}, as already mentioned in the abstract, is the complexity class consisting of all languages accepted by polynomial-size constant-depth
families of Boolean circuits with AND/OR gates of unlimited fan-in. The proof of the Larose-Tesson theorem and the proof of our main contribution below show  first-order definability in the easy case, from which follows the containment in non-uniform AC$^0$ (see, e.g., Theorem 6.4 in \cite{Libkin}). 
Recall that there are problems in L which are not in non-uniform $\text{AC}^0$~\cite{ParityCircuits}, i.e., $\text{non-uniform AC}^0 \subsetneq L$. In particular, any problem that is L-hard under first-order reduction cannot lie in non-uniform $\text{AC}^0$.
To see that the dichotomies between L-hard and non-uniform AC$^0$ are non-trivial, observe that PARITY is an example of a computational problem in L that is provably neither in $\text{AC}^0$ nor L-hard; see, e.g.,~\cite {VollmerCircuit} for an introduction to circuit complexity. 

Since the question on the P versus NP-complete dichotomy for first-order reducts of finitely-bounded homogeneous structures is notoriously resistant and the conjectures from~\cite{LaroseT09TCS} on CSPs in L and NL are not even solved for finite structures, it is reasonable to ask if at least Theorem~\ref{thm:LaroseTesson} may be generalized. In this paper, we achieve this goal for a huge class of structures within Conjecture~\ref{conj:BodirskyPinsker}. 

\begin{restatable}{theorem}{main}
\label{thm:main}
\textbf{(Main theorem)}
Let $\sA$ be a finite-signature first-order expansion of a finitely bounded homogeneous model-complete core $\sB$.   Then one of the following holds. 
\begin{itemize}
\item $\csp(\sA)$ is L-hard (under first-order reduction).
\item  $\csp(\sA)$ is first-order definable, and hence in non-uniform $\text{AC}^0$.
\end{itemize}
\end{restatable}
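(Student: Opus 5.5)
We split according to whether a certain ``two-pointed'' configuration can be pp-defined in $\sA$. Throughout, $\sB$ is a finitely bounded homogeneous model-complete core and $\sA$ a finite-signature first-order expansion of $\sB$. Since $\sA$ is a first-order expansion, every orbit of $k$-tuples under $\Aut(\sB)$ is first-order definable in $\sA$. Because $\sB$ is a model-complete core, these orbits are in fact pp-definable in $\sA$ up to the usual caveat: in a model-complete core every orbit of tuples is existentially positive definable, and relations of the expansion are preserved by the polymorphisms of $\sA$, which are locally generated by self-embeddings on orbits. Homogeneity together with finite boundedness makes $\csp(\sB)$, and more generally questions about orbits, locally checkable: a finite instance is satisfiable in $\sB$ iff it has no substructure isomorphic to one of the finitely many bounds. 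This is the infinite analogue of finiteness that we will use.

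\textbf{Hard case.} Suppose there exist orbits $O_0\neq O_1$ of $k$-tuples ($k$ bounded by the maximal size of a bound plus the maximal arity) and a pp-definable relation $R\subseteq O_0^2\cup O_1^2 \cup O_0\times O_1$ or $R\subseteq O_0^2\cup O_1^2$ that contains both $O_0^2$ and $O_1^2$ ``essentially'', so that it behaves like equality on a two-element quotient. Then we give a first-order reduction from \unreach, imitating $\csp(\{0,1\};\{(0)\},\{(1)\},\{(0,0),(1,1)\})$. Each vertex becomes a block of $k$ variables, edges become copies of $R$, $s$ is pinned to $O_0$ and $t$ to $O_1$ via the pp-defined orbit relations. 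Some care is needed to choose $R$ so that connectedness forces membership in the same orbit while disconnected components can be realized independently. This is where we use homogeneity: amalgamating independent components must not create a forbidden bound, which we guarantee by taking free amalgams over the pinned part or, failing that, by enlarging $k$.

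\textbf{Easy case.} If no such configuration exists, we follow the strategy of the abstract: reprove Larose--Tesson and lift it. The finite proof we have in mind shows that the absence of the ``equality-on-two-classes'' pp-definitions forces $\sA$ to have \emph{finite duality}. The argument is that a minimal unsatisfiable instance cannot contain long paths, since a long path would pigeonhole into a repeated pair of orbit-states, yielding a pp-defined $R$ of the forbidden form. Once obstructions have bounded size, i.e.\ bounded treewidth together with bounded diameter, finite duality follows, and finite duality gives first-order definability because the conjunction of the negations of the finitely many obstructions defines the CSP. In the infinite case we run the same compactness-plus-pigeonhole argument over orbits of $k$-tuples, of which there are finitely many by $\omega$-categoricity. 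Finite boundedness lets us translate ``no homomorphism to $\sA$'' into a finite family of forbidden patterns: unsatisfiable instances are those containing either a finite obstruction built from $\sA$-relations or a forced image of a bound of $\sB$.

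\textbf{Main obstacle.} The crux is the pigeonhole step in the easy case: showing that an arbitrarily long minimal obstruction yields the pp-definition used in the hard case. In the finite setting this is done via a ``path'' of binary projections of a critical instance. Here I would first try to work with minimal critical instances whose variables are labelled by orbits of $k$-tuples in a canonical solution of every proper subinstance. The model-complete core property is needed so that different orbits are genuinely pp-distinguishable, which makes this labelling coherent, and $k$ must be large enough to see all bounds.
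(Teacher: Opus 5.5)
Your case split uses the wrong configuration, and it fails in both directions.

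\textbf{The hard case is stated too broadly.} It is phrased with pp-definability, which allows equality atoms, and such definitions do not yield first-order reductions. Take the finite core $\sA=\sB=(\{0,1\};\{0\},\{1\})$, which is homogeneous and $2$-bounded. The pp-formula $x=y$ defines $R=\{(0,0),(1,1)\}=O_0^2\cup O_1^2$, so your hard case applies. Yet $\csp(\sA)$ is first-order definable. Your reduction actually lands in $\csp(\{0,1\};\{(0)\},\{(1)\},=)$, which is \unreach itself, and this problem does not first-order reduce to $\csp(\sA)$.

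This is exactly why the paper works with \efpp-formulae (Lemma~\ref{lem:efpp-FO}) and pins only to almost injective orbits (Observation~\ref{obs:core}). For instance, the orbit of constant pairs in $(V;E,N)$ is not \efpp-definable, so your pinning step needs this restriction as well.

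\textbf{The easy case is stated too narrowly.} Confining $R$ to two single orbits makes the negation too weak to give finite duality. Take $\sA=(\mathbb{Z}_3;\{0\},\{1\},\{2\},\{(x,y,z)\mid x+y+z=0\})$, a first-order expansion of the finite core $(\mathbb{Z}_3;\{0\},\{1\},\{2\})$. All its orbits are single tuples, and all its pp-definable relations are cosets of subgroups of some $\mathbb{Z}_3^n$. Hence neither $\{(\mathbf{a},\mathbf{a}),(\mathbf{b},\mathbf{b})\}$ nor $\{(\mathbf{a},\mathbf{a}),(\mathbf{b},\mathbf{b}),(\mathbf{a},\mathbf{b})\}$ with $\mathbf{a}\neq\mathbf{b}$ is pp-definable. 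Your easy case would therefore assert first-order definability, but $\csp(\sA)$ is L-hard by Proposition~\ref{prop:balanced-finite-hard}.

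The correct dividing line is the paper's balanced implication, between \efpp-definable sets of $k$-tuples that may be unions of many orbits and lie strictly inside the projections. In this example, equality on $\mathbb{Z}_3$, \efpp-defined using an auxiliary variable pinned to $0$, is a balanced $(\{0\},x,\{0\},y)$-implication. Composing and iterating such a formula yields an \efpp-definable equivalence relation with at least two, possibly more, classes. Incidentally, your amalgamation worry is unnecessary: solutions need not be injective, so each component can be sent to a single tuple.

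\textbf{The easy case lacks its key steps.} Even with the right criterion, it is only asserted. The pigeonhole needs a linear chain of derived projections, which a minimal unsatisfiable instance does not directly provide. The paper obtains such a chain in two steps that your sketch lacks.

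First, the infinite analogue of Proposition~\ref{prop:noimp1min}: without balanced implications, $(k,\max(k,\ell))$-minimality solves $\csp(\sA)$. Projections of a minimal instance are shrunk along a sink of an acyclic graph of implications until the $k$-projections are single orbits. Then $k$-homogeneity and $\ell$-boundedness produce a solution; this is where finite boundedness is really used.

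Second, the infinite analogue of Proposition~\ref{prop:unbalanced-finite-duality}. The minimality algorithm is rewritten so that every derived relation, including the empty one, is witnessed by a $k$-tree \efpp-formula. Minimality bounds the branching in terms of the number of orbits and the arities, so bounded depth gives finitely many obstructions. A long root path instead contains two roots with the same pair of projections (with and without the bottom atom), and that pair is a balanced implication. Without bounded branching, bounded treewidth plus bounded diameter does not bound size.

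Finally, your claim that finite boundedness makes $\csp(\sB)$ locally checkable is false. Bounds govern embeddings, not homomorphisms. For example, $(\mathbb{Q};<)$ is homogeneous and $3$-bounded, yet its CSP is L-hard (Example~\ref{ex:impl-Q}).
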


In the algebraic approach to the computational complexity of CSPs, instead of $\csp(\sA)$, one usually considers $\csp(\sA')$, which is equivalent to $\csp(\sA)$ but $\sA'$ is a structure with better properties known as the core of $\sA$, or in the case of infinite-domain CSPs, a model-complete core. Theorem~\ref{thm:LaroseTesson} was proved without loss of generality for finite core structures. For infinite structures, we cannot simply restrict to model-complete cores 
--- it is not even known if a model-complete core of a reduct of a finitely bounded homogeneous structure is a reduct of a finitely bounded homogeneous structure --- see Question~(5) on the list of open problems in~\cite{BodirskyBook}, in Section~14.2. (Note that this question has been settled affirmatively in the case of 
Ramsey structures~\cite{Cores_Ramsey}.) Irrespective of the answer to this question, our proof requires the assumption that $\sA$ and $\sB$ have the same automorphism group, or equivalently, they are first-order interdefinable. Thus, $\sA$ needs to be a first-order expansion of $\sB$. We agree that it restricts the scope of infinite structures under consideration. Regardless, we still believe that our result is essentially the most general complexity classification when it comes to the scope of Conjecture~\ref{conj:BodirskyPinsker}.
Indeed, the partial dichotomies mentioned above usually start with a very simple basic structure such as $(\mathbb{Q}; <)$ or a homogeneous graph, and then consider all its first-order reducts. Despite the development and use of certain general methods~\cite{BodPin-CanonicalFunctions,Smooth-journal}, all these proofs are tailored to specific basic structures and are not generalizable. The only notable exception is the classification of CSPs definable in the logic MMSNP, but that proof relies on representing CSPs as MMSNP formulae.

\begin{example}
\label{ex:thm-main}
The order $(\mathbb{Q}; <)$ over rational numbers is a finitely bounded homogeneous
model-comple\-te core. Theorem~\ref{thm:main} covers all its expansions by first-order definable relations. In this case, all corresponding CSPs are L-hard, in fact, even NL-hard.

On the other hand, consider $(V; E, N)$ which is the expansion of the countable universal homogeneous graph (or the countable homogeneous triangle-free graph) by the \emph{non-edge} relation which holds on different vertices not connected by an edge.  Its CSP is first-order definable (and hence in non-uniform AC$^0$) but, as we will see, some of its first-order expansions give rise to L-hard CSPs.
\end{example}

In order to prove Theorem~\ref{thm:main}, we first provide a new proof of Theorem~\ref{thm:LaroseTesson} and then generalize the new proof to infinite structures covered by Theorem~\ref{thm:main}.

\subsection{Proof strategy}

\paragraph{Finite duality.} Throughout the paper, we freely change the perspective on $\csp(\sA)$, seeing its instances sometimes as sets of constraints over relations in $\sA$ to be solved and sometimes as structures $\mathbb{C}$ over the same signature as $\sA$ which are to be verified whether they homomorphically map to $\sA$ or not. Using the latter perspective, one can show that $\csp(\sA)$ is first-order definable if and only if $\sA$ has finite duality, i.e.,  a finite set of \emph{obstructions} $\mathcal{O}$ such that $\mathbb{C}$ admits a homomorphism to $\sA$ if and only if no $\sD \in \mathcal{O}$ maps homomorphically to $\mathbb{C}$. For the result, see Theorem 5.6.2 in~\cite{BodirskyBook}, proved using Rossman's preservation theorem~\cite{Rossman08}. For finite structures, we first knew this characterization of first-order definable CSPs thanks to Atserias~\cite{Atserias05}.

\paragraph{\efpp-definitions and first-order reductions.} Further, we observe that a finite core
struc\-ture $\sA$ with $|A| \geq 2$ containing the
equality $=$ and all constants in its signature gives rise to an $L$-hard $\csp(\sA)$. A first-order
reduction from \unreach can now be simply achieved by marking a constant $s$ in $\sA$ as
a source, some other constant $t$ as a target, and introducing a constraint $x = y$ between vertices
connected by an edge in the graph. Therefore, as it is sometimes the case, we cannot include the
equality in the signature of $\sA$ if it is not explicitly there, and we cannot use primitive
positive definitions with a free use of equality as a reduction between considered CSPs. Hence,
instead of primitive positive formulae, we use equality-free primitive positive formulae
(\efpp) where only atoms built out of relational symbols in the signature of
$\sA$ are used,~see e.g.~\cite{Borner08} for different types of pp-formulae. We show in Section~\ref{sect:prelim} that \efpp-definitions (provided by
\efpp-formulae) give rise to first-order reductions between the corresponding CSPs.

\paragraph{Proof sketch for Theorem\texorpdfstring{~\ref{thm:LaroseTesson}}{ 1.2}} Without loss of generality, we assume that $\sA$ is a core and contains all constants. An \efpp-formula $\phi(x,y)$ with  free variables $x$ and $y$ is called a $(C, x, D, y)$-imp\-li\-cation in $\sA$ if $C,D$ are \efpp-definable, strictly contained in the projections of $\phi$ to $x$ and $y$, respectively, and if $D = \{ y \mid \exists x \in C \wedge R(x,y)  \}$.
A $(C, x, D, y)$-imp\-li\-cation in $\sA$ is \emph{balanced} if $C = D$ and the projections of $\phi$ to $x$ and $y$ coincide.  
Now, if there is an \efpp-definition of a balanced implication in $\sA$, then we can also \efpp-define an equivalence relation on a subset of the domain $A$ of $\sA$ with at least two equivalence classes, which together with all constants naturally admit a first-order reduction from \unreach. On the other hand, if a balanced implication is not \efpp-definable, then $\csp(\sA)$ is solvable by the $(1)$-minimality algorithm, also known as generalized arc-consistency. We reformulate this well-known algorithm, so that we could produce what we call tree \efpp-formulae defining every domain, also empty, deduced by it on an instance of $\csp(\sA)$. A tree \efpp-formula defining an empty set over an unsatisfiable instance gives rise to an obstruction. We show that the number of indistinguishable obstructions is finite, or from a huge enough tree \efpp-formula witnessing the unsatisfiability of some instance, we may produce a balanced implication.

\paragraph{Proof sketch for Theorem\texorpdfstring{~\ref{thm:main}}{ 1.3}} Our main contribution is proved in a similar fashion. The main differences are that we need to consider implications as formulae over pairs of tuples of variables rather than pairs of variables.

\begin{example}
\label{ex:impl-graphs}
Consider an expansion $\sB = (V,E,N)$ of
a universal countable homogeneous graph by a binary non-edge relation $N$.
This structure is a model-complete core. It is homogeneous, in fact, 2-homogeneous, i.e., every orbit is definable by a conjunction of binary atoms: with $E, N$ and $=$. It is also finitely bounded: a structure $\mathbb{C}$ embeds into $\sB$ if and only if $E^{\mathbb{C}}$ contains neither $(x,x)$ (an $E$-loop) nor $(x,y)$ without $(y,x)$  (an $E$-arrow) and $N^{\mathbb{C}}$ contains neither $(x,x)$ (an $N$-loop) nor $(x,y)$ without $(y,x)$  (an $N$-arrow) and there are no $\{x,y\}$ marked both as an edge and a non-edge or marked neither as en edge nor as a non-edge. Since all the bounds are of size at most $2$, $\sB$ is a $2$-bounded structure that satisfies the prerequisites needed for Theorem~\ref{thm:main}.

 The group of automorphisms of $\sB$ is transitive, that is, every element can be mapped onto any other by an appropriate automorphism. Therefore, one cannot define a non-trivial subset of $V$, and hence an implication over a pair of variables as defined above for a finite structure. On the other hand, $\phi(x_1, x_2, y_1, y_2) \equiv (E(x_1, x_2) \wedge E(y_1, y_2)) \vee (N(x_1, x_2) \wedge N(y_1, y_2))$ is, according to 
 Definition~\ref{def:inf-implication}, 
 an $(E, (x_1, x_2), E, (y_1, y_2))$-implication. It is quite easy to see that there is a first-order reduction from \unreach to $\csp(\sB')$ where $\sB'$ extends $\sB$ with the relation defined by $\phi$. We replace any vertex with a pair of variables, adding $E(x_1, x_2)$ for the source and $N(y_1, y_2)$ for the target. On the other hand, the structure $\sB$ is easily seen to have finite duality (the only obstructions are an $E$-loop, an $ N$-loop, and a two-element structure which imposes both $E$ and $N$ on its elements). Hence, $\csp(\sB)$ is first-order definable and in non-uniform AC$^0$.  
\end{example}

A very similar example may be given
 for the universal homogeneous triangle-free graph.
We consider the rational order below.

\begin{example}
\label{ex:impl-Q}
The structure $(\mathbb{Q}; <)$ is $2$-homogeneous and $3$-bounded.
Observe also that $\phi(x,y,z) \equiv y < z$ is a $(<, (x, y) , <,  (x,z))$-implication. By Proposition~\ref{prop:balimpLhard}, it follows that $\csp(\mathbb{Q}; <)$ is L-hard under first-order reduction.
\end{example}

In general, a finite-signature finitely bounded homo\-ge\-neous model-complete core $\sB$ is $k$-homo\-ge\-neous and $\ell$-bounded for some $k,\ell \in \mathbb{N}$. For its first-order expansion $\sA$, either we may \efpp-define an implication similar to these from examples above or, as we prove, $\csp(\sA)$ may be solved by establishing $(k, \max(k, \ell))$-minimality. The $(k,\max(k,\ell))$-minimality algorithm is then reformulated so that we can have so-called $k$-tree \efpp-formulae of all at most $k$-ary relations derived by the minimality algorithm. Again, either there are finitely many  $k$-tree \efpp-formulae witnessing unsatisfiable instances, and hence a finite set of obstructions, or we can \efpp-define an implication as in the examples above. Although based on similar arguments as the proof for finite structures, the proof for infinite structures appears to be much more involved when it comes to details. 

\paragraph{Implications in other papers.} Although  in other contexts,  
implications themselves have been used before. Over finite structures, e.g. in~\cite{AbsSubalgebras}, one writes $C^{+ R}$ to denote $D = \{ y \mid \exists x \in C~R(x,y) \}$. It is, unfortunately, somehow troublesome to generalize this notation to infinite structures where one has to deal also with non-binary relations or formulae over more than two free variables. It is why we stick to the implications defined above (and also in more detail below). We note that they were also used in research on bounded strict width~\cite{Wrona:2020a,Wrona:2020b,NagyP24} and the tractability of quasi Jonsson operations~\cite{NagyPW-jonsson} over infinite structures.

\subsection{Significance  and the future work}

Our main result, Theorem~\ref{thm:main}, is arguably the most general dichotomy theorem concerning structures within the Bodirsky-Pinsker conjecture. Indeed, other dichotomies concern either
first-order reducts of some simple structures or problems definable in some logics.  

The theorem was obtained by first reproving the analogous result for finite structures and then generalizing the proof to infinite structures. A similar general idea has been adopted in~\cite{BartoBodorKozikMottetPinsker,BrunarKNP25} in order to provide new loop lemmata over $\omega$-categorical structures: 
\begin{flushleft}
\emph{old proofs for finite structures are not easily generalizable to infinite structures, so what about providing new proofs that can be more easily generalized?}
\end{flushleft}

This way of thinking brings a new hope for open problems over infinite-domain CSPs. We may just need a slightly different proof of the Bulatov-Zhuk theorem to generalize it to first-order reducts of finitely bounded homogeneous structures. Such a new proof may be, of course, hard to obtain straightaway, but we believe that the methods from this paper may be used to identify both finite and infinite structures of bounded width (solvable by the minimality algorithm) that give rise to CSPs in NL and L. We discuss this possible direction at the end of the paper in Section~\ref{sect:futurework} when our methods are already known.

\subsection{Organization of the paper}

All necessary notions are defined and explained carefully in Section~\ref{sect:prelim}. The new proof of Theorem~\ref{thm:LaroseTesson} is provided in Section~\ref{sect:finite} and the proof of our main contribution in Section~\ref{sect:infinite}. Both sections display some similarities, but the advantage of providing both proofs separately is that the reader interested only in finite-domain CSPs does not have to go through some model-theoretic notions characteristic of infinite-domain CSPs.

\section{Preliminaries}
\label{sect:prelim}
We write $[n]$ for the set $\{1, \ldots, n \}$. Most of the definitions from below may also be found in~\cite{BodirskyBook}.

\subsection{Structures and formulae}
All structures in this article are assumed to be relational and at most countable.  
Let $\sA$ be a relational structure, we denote the domain of $\sA$ by a plain letter $A$. Sometimes, for the sake of simplicity, we use the same letter $R$ both to denote a relational symbol and the corresponding relation $R^{\sA}$. Let $\phi$
be a first-order formula over the signature of $\sA$ and free variables $\Var(\phi)$. We identify
the interpretation $\phi^{\sA}$ of $\phi$ in $\sA$ with the set of satisfying assignments $a:
\Var(\phi) \to A$ for $\phi$. Let $\tuple u = (u_1, \ldots, u_k)$ be a tuple of elements of
$\Var(\phi)$. We define the \emph{projection of $\phi^{\sA}$ onto $\tuple u$}  as the relation
$\proj_{\tuple u}(\phi^{\sA}):=\{f(\tuple u)\mid f\in \phi^{\sA}\}$ where $f(\tuple u)$ is a
shorthand for $(f(u_1), \ldots, f(u_k))$ used throughout the text. 

A \emph{first-order expansion} of a structure $\sB$ is an expansion of  $\sB$ by relations which are first-order definable in $\sB$, i.e., for which there exists $\phi$ over free variables $\tuple u$  such that  $R = \{ f(\tuple u) \mid f \in \phi^{\sA} \}$.  

We write $\Aut(\sA)$ to denote the set of automorphisms of $\sA$. An orbit of a $k$-tuple $\tuple a = (a_1, \ldots, a_k)$ in $\sA$ where $a_1, \ldots, a_k \in A$ is the set 
$\{ \alpha(\tuple a) \mid \alpha \in \Aut(\sA) \}$. An orbit $O$ is \emph{injective} if the projection of $O$ to any two different coordinates is contained in disequality, and it is \emph{almost injective} if the projection to any two different coordinates is either contained in disequality or consists only of one pair of elements $(a,a)$ with $a \in A$. 
A countable relational structure $\sA$ is \emph{$\omega$-categorical} if its automorphism group is \emph{oligomorphic}, i.e., for all $n \geq 1$, the number of orbits of $n$-tuples in $\sA$ is finite. 
For an $\omega$-categorical relational structure $\sA$ and for every $n\geq 1$ it holds that two $n$-tuples of elements of the structure are in the same orbit in $\sA$ if and only if they have the same \emph{type}, i.e., they satisfy the same first-order formulae. The \emph{atomic type} of an $n$-tuple is the set of all atomic formulae satisfied by this tuple; the structure $\sA$ is \emph{homogeneous} if any two tuples of the same atomic type belong to the same orbit. For $k\geq 1$, we say that a relational structure $\sA$ is \emph{$k$-homogeneous} if for all tuples $\tuple a, \tuple b$ of arbitrary finite equal length, if all respective $k$-subtuples of $\tuple a$ and $\tuple b$ are in the same orbit in $\sA$, then $\tuple a$ and $\tuple b$ are in the same orbit in $\sA$. If a homogeneous $\sA$ has a finite signature, then it is $\omega$-categorical and $k$-homogeneous for some $k \in \mathbb{N}$. 
For $\ell\geq 1$,  we say that $\sA$ is \emph{$\ell$-bounded} if for every finite $\rel X$, if all substructures $\rel Y$ of $\rel X$ of size at most $\ell$ embed in $\sA$, then $\rel X$ embeds in $\rel A$. A structure $\sA$ is \emph{finitely bounded} if it is $\ell$-bounded for some $\ell$. In this case there is a finite set of structures $\mathcal{F}_{\sA}$ over at most $\ell$ elements such that some finite structure $\sX$ embeds into $\sA$ if and only if no structure from 
$\mathcal{F}_{\sA}$ embeds into $\sX$.

A structure $\sA$ is a \emph{model-complete core} if for all endomorphisms of $\sA$ and all finite subsets of its domain, there exists an automorphism of $\sA$ which agrees with the endomorphism on the subset.  If $\sA$ is finite and satisfies this condition,  we simply say that it is a \emph{core}.

A first-order formula is called  \emph{primitive positive} (pp) if it is built exclusively from
atomic formulae, existential quantifiers, and conjunctions. Primitive positive formulae usually
allow equalities even if they are not in the signature of $\sA$. We say that a first-order formula
is an \emph{equality-free primitive positive} (\efpp) formula for a relational structure $\sA$ if it is a pp-formula which allows only atomic formulae built from relational symbols in the signature of $\sA$.
A  relation is \emph{pp-definable} or \emph{\efpp-definable} in a relational structure  if it is
first-order definable by a pp-formula or an \efpp-formula, respectively.

It is well known that for any $\omega$-categorical model-complete core $\sA$, for all $n \geq 1$, all orbits
of $n$-tuples in $\sA$ are pp-definable~\cite{cores}. For \efpp-definitions, we have the following.

\begin{observation}
\label{obs:core}
Let $\sB$ be a model-complete core, then every almost injective orbit of tuples in $\sB$ is \efpp-definable in $\sB$.
\end{observation}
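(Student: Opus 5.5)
Given an almost injective orbit $O$ of $n$-tuples in $\sB$, pick a representative $\tuple a = (a_1,\dots,a_n)$. First identify coordinates: since $O$ is almost injective, for any $i \neq j$ either $a_i \neq a_j$ on all of $O$, or the projection of $O$ to $(i,j)$ is a single pair $(a,a)$. In the second case the orbit of $a$ under $\Aut(\sB)$ is just $\{a\}$, because $\alpha(a_i)=a$ for every automorphism $\alpha$. Hence equal coordinates of $\tuple a$ are constant fixed points of $\Aut(\sB)$. Let $b_1,\dots,b_m$ be the distinct elements among $a_1,\dots,a_n$.

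The plan is to take the canonical (diagram) formula. Introduce one variable $y_d$ for each element $d$ of $B$, and let $\psi$ be the conjunction of all atoms $R(y_{d_1},\dots,y_{d_r})$ with $R$ in the signature and $(d_1,\dots,d_r) \in R^{\sB}$. This is an infinite conjunction, so we fix it as follows. Set
\[ \phi(x_1,\dots,x_n) \equiv \exists (y_d)_{d \in B \setminus \{b_1,\dots,b_m\}}\, \psi, \]
where $y_{b_j}$ is replaced by $x_i$ for one chosen $i$ with $a_i=b_j$. When several $i$ share the value $b_j$, we add a copy of the atoms to bind each $x_i$ to $b_j$ without using equality. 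The key point is that such a $b_j$ is fixed by all automorphisms. Any solution yields an endomorphism-like map. By the model-complete core property, this map agrees on finite sets with an automorphism, which must fix $b_j$, so both $x_i$ are forced to equal $b_j$ even though they are different variables. Equality thus comes for free from the constants.

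Soundness is clear: $\tuple a$ satisfies $\phi$ via the identity assignment, and then every tuple of $O$ does, by applying automorphisms. For the converse, a satisfying assignment $h$ of $\psi$ is a homomorphism $\sB \to \sB$, i.e.\ an endomorphism $e$. By the model-complete core property, there is an automorphism $\alpha$ agreeing with $e$ on $\{a_1,\dots,a_n\}$, so $h(\tuple x) = \alpha(\tuple a) \in O$.

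The main obstacle is finiteness: an \efpp-formula must be finite, yet $B$ is infinite. The fix is compactness, which relies on $\omega$-categoricity in the intended applications; alternatively we work with the finite substructure spanned by $\tuple a$ together with finitely many witnesses. Concretely, for each finite $F \subseteq B$ containing the $a_i$, let $\phi_F$ be the formula obtained from $\psi$ restricted to $F$. These formulas form a decreasing chain of relations, each containing $O$. If every $\phi_F$ also contained some tuple $\tuple c \notin O$, then compactness (countable saturation) would give a homomorphism from all of $\sB$ with $\tuple a \mapsto \tuple c'$ for some $\tuple c'$ in the same type as $\tuple c$. This contradicts the previous paragraph. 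Because there are finitely many orbits of $n$-tuples, some finite $F$ already excludes all of them, and $\phi_F$ is the required \efpp-definition. The delicate point is handling non-injective coordinates without equality. For these I would write each repeated occurrence as its own variable within the diagram atoms, so that distinct variables carrying a fixed constant are identified automatically.
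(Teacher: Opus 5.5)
Your argument is correct, given $\omega$-categoricity. The paper needs this too, implicitly: its proof starts from the theorem of \cite{cores} that orbits of an $\omega$-categorical model-complete core are pp-definable. Your route, however, differs from the paper's.

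The paper treats that pp-definition as a black box and removes equalities syntactically. It closes the equality atoms into an equivalence relation $E$ on the variables and substitutes class representatives, choosing a free variable when one exists. When an $E$-class contains two distinct free variables $x_i,x_j$, it uses almost injectivity: the projection onto $(x_i,x_j)$ is a single pair $(c,c)$. Hence $\{c\}$ is a unary orbit, \efpp-definable by the unary case, and the equalities in that class can be replaced by the constraint $\{c\}(y)$ on each of its variables.

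You instead reprove pp-definability of orbits from scratch. You use finite pieces of the positive diagram and the compactness/K\"onig argument, which produces an endomorphism sending $\tuple a$ into a foreign orbit and so contradicts the model-complete core property. You then notice that these canonical formulae never contain equality. Equality would only be needed to glue repeated coordinates, and almost injectivity lets you replace that gluing by binding each repeated coordinate to an $\Aut(\sB)$-fixed point.

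The paper's route is shorter given the citation and purely syntactic. Yours is self-contained, shows exactly where $\omega$-categoricity enters (finitely many orbits of $n$-tuples, and the compactness step), and isolates exactly why almost injectivity is needed.

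Three small tidy-ups:
\begin{itemize}
\item It is cleaner to bind a repeated coordinate with the finite \efpp-definition of the singleton orbit $\{b_j\}$, which the $n=1$ case of your own argument provides. If you keep the extra diagram copies instead, run the compactness step on the disjoint union of the copies.
\item The pigeonhole over the finitely many orbits $O'\neq O$ should come before the compactness step. Then a single $\tuple c\in O'$ lies in every $\phi_F$, since $\phi_F^{\sB}$ is $\Aut(\sB)$-invariant.
\item A finite $F$ yields a finite formula only because the signature is finite, as it is in the paper's setting. Otherwise, take finite sets of diagram atoms rather than finite sets of elements.
\end{itemize}
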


\begin{proof}
    
Let $k \geq 1$ and $O$ be an almost injective orbit of $k$-tuples in $\sB$. By~\cite{cores}, there exists a pp-definition $\phi(x_1, \ldots, x_k)$ of $O$ with free variables $x_1, \ldots, x_k$. We assume without loss of generality that $\phi$ consists of a quantifier prefix and a quantifier-free part.
The goal is to remove all equalities from $\phi$ without affecting the relation defined by $\phi$. 
We first add to $\phi$ all equalities of the form $x = x$ for every variable $x$ in the definition of $\phi$, we also add $x = y$ if it contains $y = x$ or there exists $z$ such that both $x = z$ and $z = y$ are in $\phi$. We add equalities to $\phi$ in this way as long as the fixed point is not reached.
Then, the pairs of variables $x,y$, such that $x = y$ is in the new $\phi$, form an equivalence relation $E$.

We now consider two cases. 
If $k=1$, then there are no equalities of the form $x_i = x_j$ with $i, j \in [k]$ such that $i \neq j$, and hence every equivalence class of $E$ contains at most one element in $\{ x_1, \ldots, x_k \}$.
Thus, we can replace all variables in $\phi$ by the representatives of their equivalence class. If possible, we choose $x_i$ for $i \in [k]$. Now, we can safely remove all equalities from $\phi$ without affecting the relation it defines.
The second case is where $k \geq 2$. Now we can have $x_i=x_j$ in $\phi$ with $i, j \in [n]$ such that $i \neq j$.  
In this case, since $O$ is almost injective, the projection of $\phi$ to $x_i$ and $x_j$ is $C \subseteq A$ with $|C| = 1$. By the previous case, $C$ is \efpp-definable in $\sA$. Thus, we can add a constraint $C(y)$ for every variable $y$ in the equivalence class of $E$ that contains both $x_i$ and $x_j$. Then we can safely remove all equalities between the variables in this equivalence class. In the same vein, we remove equalities in other equivalence classes of $E$ containing at least two variables in $\{x_1, \ldots, x_k\}$. If there is one or zero such variables in the considered equivalence class, then we proceed as in the case where $k = 1$. Hence, in this case as well, we showed how to remove equalities from $\phi$ without changing the relation it defines.
\end{proof}

\subsection{CSPs and  the minimality algorithm}
For the relation $R\subseteq A^k$ over a non-empty set $A$ and $i_1,\ldots,i_{\ell}\in[k]$, we write $\proj_{(i_1,\ldots,i_{\ell})}(R)$ for the $\ell$-ary relation $\{(a_{i_1},\ldots,a_{i_{\ell}})\mid (a_1,\ldots,a_k)\in R\}$.
The number $k$ is the \emph{arity} of $R$.
The \emph{scope}, i.e., the set of all entries of a tuple $\tuple t$
is denoted by $\scope(\tuple t)$.
We write $I_k^A$ for the relation containing all injective, i.e., with pairwise different entries, $k$-tuples of elements in $A$, and $\binom{S}{k}$ to denote the set of all $k$-element subsets of the set $S$. 

Let $R \subseteq A^k$ and $U$ be a set of variables. Then a constraint $C$ over $R$ and with the scope $U$ is a subset of $A^U$ such that there exists an enumeration $u_1,\ldots,u_k$ of the elements of $U$ so that for all $f\colon U\to A$, it holds that 
$f\in C$ if, and only if, $(f(u_1),\dots,f(u_k))\in R$.
An \emph{instance of $\csp(\sA)$} is a pair $\instance=(\V,\constraints)$, where $\V$ is a finite set of variables and $\mathcal C$ is a finite set of \emph{constraints} over relations in $\sA$ and whose scopes are subsets of $\mathcal{V}$. 
The relational structure $\sA$ is called the~\emph{template} of $\csp(\sA)$.
A~\emph{solution} of a CSP instance~$\instance = (\V,\constraints)$ is a mapping $f\colon \V\rightarrow A$ such that for every $C\in\constraints$ with scope $U$, $f|_U\in C$.

An instance $\instance=(\V,\constraints)$ of $\csp(\sA)$ is \emph{non-trivial} if it does not contain any empty constraint; otherwise, it is \emph{trivial}.
Given a constraint $C\subseteq A^U$ with $U\subseteq \V$ and a tuple $\tuple{v}\in U^k$ for some $k\geq 1$, the \emph{projection of $C$ onto $\tuple{v}$} is defined by $\proj_{\tuple{v}}(C):=\{f|_{\scope(\tuple{v})} \mid f \in C\}$.

\begin{definition}\label{def:minimality}
Let $1\leq k\leq \ell$. We say that an instance $\instance=(\V,\constraints)$ of $\csp(\sA)$ is \emph{$(k,\ell)$-minimal} if both of the following hold:
\begin{enumerate}
\item \label{algmin:full} the scope of every tuple of elements of $\V$ of length at most $\ell$ is contained in the scope of some constraint in $\constraints$;
\item \label{algmin:proj} for every $m\leq k$, for every tuple $\tuple u\in \V^m$, and for all constraints $C_1, C_2 \in \constraints$ whose scopes contain the scope of $\tuple u$, the projections of $C_1$ and $C_2$ onto $\tuple u$ coincide.
\end{enumerate}
We say that an instance $\instance$ is \emph{$k$-minimal} if it is $(k,k)$-minimal.
\end{definition}

Let $1\leq k$. If $\instance=(\V,\constraints)$ is a $k$-minimal instance and $\tuple u$ is a tuple of variables of length at most $k$, then there exists a constraint in $\constraints$ whose scope contains the scope of $\tuple u$. 
Then the projection of this and all other constraints in $\instance$ containing $\tuple u$ to $\tuple u$ is the very same constraint, it is denoted by $\proj_{\tuple u}(\instance)$ and called the \emph{projection of $\instance$ onto $\tuple u$}.

Let $1\leq k\leq \ell$, let $\sA$ be an $\omega$-categorical structure.
Clearly not every instance $\instance=(\V,\mathcal{C})$ of $\csp(\sA)$ is $(k,\ell)$-minimal.
However, it is well known that every instance $\instance$ is \emph{equivalent} to a $(k,\ell)$-minimal instance $\instance'$ of $\csp(\sA')$ where $\sA'$ is the expansion of $\sA$ by all at most $\ell$-ary 
pp-definable relations in the sense that $\instance$ and $\instance'$ have the same  solution set. 
In particular, if $\instance'$ is trivial, then $\instance$ has no solutions. Moreover, $\instance'$ can be computed from $\instance$ in polynomial time by enforcing both conditions in the natural way. First, by introducing a full constraint over an $l$-ary subset of $\mathcal{V}$ when necessary, and then by repeatedly removing assignments from constraints so that the second condition was satisfied. In the remainder of this paper, we refer to this algorithm as the $(k,l)$-minimality algorithm. 

For the purposes of this paper, we need a slightly stronger result. Indeed, we will show that it is enough to look at the expansion $\sA'$ of $\sA$ by all \efpp-definable relations, i.e.,  every instance $\instance$ is \emph{equivalent} to a $(k,\ell)$-minimal instance $\instance'$ of $\csp(\sA')$ where $\sA'$ is the expansion of $\sA$ by all at most $\ell$-ary 
\efpp-definable relations. 

\begin{observation}
\label{obs:efppmin}
Let $\instance$ be an instance of $\csp(\sA)$ and $\sA'$ an expansion of $\sA$ that contains all \efpp-definable relations in $\sA$ of arity at most $\ell$. Then there exists a $(k,\ell)$-minimal instance $\instance'$ of $\csp(\sA')$ equivalent to $\instance$ where $\sA'$ is an expansion of $\sA$ by all at most $\ell$-ary \efpp-definable relations. 
\end{observation}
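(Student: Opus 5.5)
The plan is to run the usual $(k,\ell)$-minimality algorithm on $\instance$ and check, by induction over its steps, that every constraint it ever produces is \efpp-definable in $\sA$ and has arity at most $\ell$.

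\textbf{Invariant.} At every stage, each constraint $C$ with scope $U$, enumerated as $u_1,\ldots,u_m$, is given by an \efpp-formula $\phi_C(u_1,\ldots,u_m)$ over $\sA$. These formulas use the variables of $\instance$ as free variables and additional existentially quantified variables. Moreover:
\begin{itemize}
\item the solution set of the current instance equals that of $\instance$;
\item every newly produced constraint has scope of size at most $\ell$.
\end{itemize}
Note that the original constraints of $\instance$ may have arity above $\ell$. We keep them as they are, since they are relations of $\sA$ and hence of $\sA'$.

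\textbf{Base case and the first condition.} Initially each constraint is an atom of $\sA$, so the invariant holds. For condition~\ref{algmin:full} of Definition~\ref{def:minimality}, for each set $U\subseteq\V$ with $|U|\le\ell$ not covered by any constraint we add the full relation $A^U$. This is the one point where a naive approach would use equality or $\top$, which are not allowed. Instead I take any original constraint $C'$ containing a given variable $u$ and project it onto $u$ by existentially quantifying all its other variables. The conjunction of these unary \efpp-formulae over $u\in U$ defines a relation $R$ that contains every projection of the solution set onto $U$. Adding $R$ as a constraint on $U$ therefore does not change the solution set.

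If some variable $u$ occurs in no constraint, it is unconstrained. I would treat such variables separately: either assume without loss of generality that every variable occurs in some constraint, or note that they can be dropped and reinstated afterwards. In the latter case, $U$ together with its covering requirement is handled by covering only the constrained part.

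\textbf{Second condition.} The pruning steps replace a constraint $C_1$ on $U_1$ by
\[
C_1\cap\{f : f|_{\scope(\tuple u)}\in\proj_{\tuple u}(C_2)\},
\]
where $\tuple u$ lies in both scopes and has length at most $k$. Written as a formula, this is
\[
\phi_{C_1}\wedge\exists\,(\text{vars of }C_2\setminus \tuple u)\,\phi_{C_2},
\]
after renaming bound variables apart. This is again an \efpp-formula, and its arity is $|U_1|$, so the arity bound is preserved. Each step is sound: every solution satisfies $C_2$, hence its restriction to $\tuple u$ lies in $\proj_{\tuple u}(C_2)$, so no solution is removed. Since constraints only shrink and there are finitely many variables, the process stabilizes on finite structures. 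For $\omega$-categorical $\sA$ it stabilizes as well, because there are finitely many orbits and the \efpp-definable relations are $\Aut(\sA)$-invariant. When it terminates, the resulting $\instance'$ is $(k,\ell)$-minimal. Its constraints are either original relations or \efpp-definable relations of arity at most $\ell$, so $\instance'$ is an instance of $\csp(\sA')$ equivalent to $\instance$.

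\textbf{Main obstacle.} The only delicate step is the first one: expressing the full relation needed for condition~\ref{algmin:full} without equality, together with the handling of unconstrained variables. Everything else is routine closure of \efpp-formulae under conjunction and existential quantification.
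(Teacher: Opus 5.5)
Your pruning step is the paper's: $C_1$ is replaced by the relation defined by $\exists(\text{vars of }C_2\setminus\tuple u)\,\bigl(\phi_{C_1}\wedge\phi_{C_2}\bigr)$. The gap is in the one place you deviate, condition~\ref{algmin:full}.

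You declare the full relation $A^U$ inadmissible, but the paper's proof says explicitly that ``all full relations are \efpp-definable in $\sA$ by empty formulae''. Only equality atoms are forbidden. Lemma~\ref{lem:efpp-fo-seq} even builds full relations from dummy variables via operation~(2), and Lemma~\ref{lem:rearrangefo} gives the corresponding reduction. Your substitute, the conjunction of unary projections of original constraints, is sound whenever every variable of $U$ occurs in some constraint. It leaves isolated variables uncovered, however, and neither of your patches fixes this:
\begin{itemize}
\item dropping and reinstating such variables yields an instance that violates condition~\ref{algmin:full} of Definition~\ref{def:minimality};
\item the ``without loss of generality'' is not justified, and under your stricter reading the statement is actually false.
\end{itemize}
For a counterexample, take $\sA=(\{0,1\};\{(0)\})$ and the instance with one variable and no constraints. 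Every \efpp-formula whose free variables all occur in atoms defines $\{(0,\ldots,0)\}$, so no equivalent instance can cover that variable. Accept the empty conjunction, as the paper does, and both the projection trick and the isolated-variable discussion become unnecessary; your argument then coincides with the paper's.

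A second, smaller problem is your invariant that every newly produced constraint has arity at most $\ell$. You explicitly keep original constraints of arity greater than $\ell$. Pruning can replace such a $C_1$ by an \efpp-definable relation of the same arity, which need not lie in $\sA'$. You also cannot exempt $C_1$ from pruning, because condition~\ref{algmin:proj} ranges over all constraints. This is not cosmetic. Let $\sA=(\{0,1\};R,P)$ with $R=\{(a,b,c)\mid a+b+c\equiv 0 \pmod 2\}$ and $P=\{(0)\}$, and take $k=\ell=1$. Then the instance $R(x,y,z)\wedge P(x)$ has no equivalent $(1,1)$-minimal instance over $\sA'$. The reason is that expressing $y=z$ forces an $R$-constraint on $\{x,y,z\}$, whose projection to $x$ is $\{0,1\}$, and this clashes with the need to pin $x$ to $0$. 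The paper's proof tacitly assumes that $\sA'$ contains the \efpp-definable relations up to arity $\max(\ell,\text{arity}(\sA))$, which its standing convention after Lemma~\ref{lem:efpp-FO} guarantees. You should state that assumption instead of claiming the arity bound is preserved.
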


\begin{proof}
Indeed, we first add a full constraint over an $l$-ary subset of $\V$ whenever necessary. All full relations are  
\efpp-definable in $\sA$ by empty formulae. In order to enforce the second condition, we repeatedly look at some 
$C_1, C_2 \in \constraints$ and a set of variables $U$ of size at most $k$ contained in both the scope of $C_1$ and $C_2$ and replace $C_1$ with some $C'_1$ so that the projection of $C'_1$ to $U$ was contained in the projection of $C_2$ to $U$. That is, we replace $C_1$ with 
$C_1' := \{ f \in C_1 \mid \exists g \in C_2~g|_{U} = f|_{U} \}$. Observe that $C'_1$ is a constraint over the relation with an \efpp-definiton: $\exists v_1 \ldots v_t~R_1(\tuple{x_1}) \wedge R_2(\tuple{x_2})$ where $U = \scope(\tuple{x_1}) \cap \scope(\tuple{x_2})$ and $\{ v_1, \ldots, v_t\} = \scope(\tuple{x_2}) \setminus U$ and $C_i$ with $i \in [2]$ is a constraint over $R_i$.
\end{proof}

 We say that the $(k,l)$-minimality algorithm solves $\csp(\sA)$ if it transforms a given instance $\instance$ into a trivial instance if and only if $\instance$ does not have a solution. In this case, one says that $\sA$ has \emph{relational width} $(k,l)$. If $\sA$ has relational width $(k,k)$, then it has relational width $k$.

Alternatively, an instance $\instance$ of $\csp(\sA)$ may be seen as a finite structure over the same signature as $\sA$. One can transform a set of constraints $\constraints$ over variables $\V$  into an appropriate structure just by taking $\V$ as its domain and including a tuple $\tuple u$ of variables into $R^{\instance}$ always when there is a constraint $C \subseteq A^U$ with $\scope(\tuple u) = U$ such that $R^{\sA} = \{ f(\tuple u) \mid f \in C \}$. On the other hand a structure $\sI$ may be transformed into the appropriate set of constraints by taking its domain as a set of variables $\V$ and including a constraint $C \subseteq A^{U}$ over the relation $R^{\sA}$ 
always when 
$R^{\sI}$ contains $\tuple u$. 

\begin{definition}
Let $\sA$ be a $\tau$-structure. A set of finite $\tau$-structures $\mathcal{O}$ is called an obstruction set for $\sA$ if, for
any finite $\tau$-structure $\mathbb{C}$, we have that $\mathbb{C}$ maps homomorphically to $\mathbb{A}$ if and only if  no $\sD \in \mathcal{O}$ maps homomorphically to $\mathbb{C}$.

We say that a structure has finite duality if it has a finite obstruction set. 
\end{definition}

\subsection{FO-reductions and \efpp-formulae}

\begin{definition}
\label{def:FO-inter}
Let $\sigma$ and $\tau = \{R_1,\ldots,R_s\}$ be two relational signatures. A $k$-ary first-order interpretation with $p$ parameters of
$\tau$ in $\sigma$ is an $(s+1)$-tuple $\iota =(\phi_U, \phi_{R_1},\ldots,\phi_{R_s})$ of first-order formulae over $\sigma$, where $\phi_U = \phi_U(\tuple x, \tuple y)$ has
$k+p$ free variables $\tuple x = (x_1,\ldots,x_k)$ and 
$\tuple y= (y_1,\ldots,y_p)$ and $\phi_{R_i}(\tuple x^1, \ldots, \tuple x^r, \tuple y)$ has $k \cdot r+p$ free variables where $r$ is
the arity of $R_i$ and each $\tuple x^j =(x_1^j, \ldots ,x_k^j)$ and $\tuple y$ is as above.

Let $\mathbb{C}$  be a $\sigma$-structure. A tuple $\tuple c = (c_1,\ldots,c_p)$ of elements of $\mathbb{C}$ is said to be proper if $c_i\neq c_j$ when $i \neq j$. Let
$\tuple c = (c_1,\ldots,c_p)$ be proper. The interpretation of $\mathbb{C}$  through $\iota$ with parameters $\tuple c$, denoted by $\iota(\mathbb{C}, \tuple c)$, is the $\tau$-structure whose universe is $\{ \tuple a \in C^k : \phi_{U} (\tuple a, \tuple c)\}$
and whose interpretation for $R_i$ is
$$\{(\tuple a^1,\ldots, \tuple a^r) \in (C^k)^r \mid \phi_U(\tuple a^1,\tuple c) \wedge \cdots \wedge \phi_U(\tuple a^r, \tuple c) \wedge \phi_{R_i}(\tuple a^1,\ldots,\tuple a^r, \tuple c)\}$$
\end{definition}

\begin{definition} 
\label{def:FO-red}
Let $\sigma$ and $\tau$ be finite relational signatures, let $\mathcal{C}$ be a class of $\sigma$-structures and let $\mathcal{D}$ be a class of $\tau$-structures
closed under isomorphisms. We say that a first-order 
interpretation $\iota$ with $p$ parameters of $\tau$ in $\sigma$
is a \emph{first-order reduction} of $\mathcal{C}$ to $\mathcal{D}$ if for every $\sigma$- structure $\mathbb{C}$ with at least $p$ elements the following two
equivalences hold:
\begin{enumerate}
\item $\mathbb{C} \in \mathcal{C}$ iff $\iota(\mathbb{C}, \tuple c)  \in \mathcal{D}$ for every proper $\tuple c$,
\item $\mathbb{C} \in \mathcal{C}$ iff $\iota(\mathbb{C}, \tuple c)  \in \mathcal{D}$ for some proper $\tuple c$.
\end{enumerate}
\end{definition}

The next step is to prove a version of Lemma~2.6 in~\cite{LaroseT09TCS} trimmed to \efpp-formulae. The difference between the lemmata is that in our version we disallow equalites.

\begin{lemma}
  \label{lem:efpp-fo-seq}
  Let $\sA$ be an arbitrary relational structure, let $R$ be a relation over $A$ and let $\sA'$ be
  an expansion of $\sA$ by $R$.
  Then the following two conditions are equivalent:
  \begin{itemize}
    \item $R$ has an \efpp-definition in $\sA$,
    \item there exists a finite sequence $\sA = \sA_{0}, \ldots, \sA_{s} = \sA'$, such that each
      relational structure $\sA_{i}$ is obtained from the previous one by one of the following
      operations:
      \begin{enumerate}[label=(\arabic*)]
        \item \label{seqop:removerel} removing a relation,
        \item \label{seqop:rearrange} adding an $r$-ary relation $R'(x_1, \ldots,x_r) \equiv R''(x_{\pi(1)},\ldots, x_{\pi(l)})$ obtained from an $l$-ary $R''$ already in the structure  by rearranging the coordinates in $R''$ according to a 
        map $\pi:[l] \rightarrow [r]$, 
        \item \label{seqop:intersect} adding the intersection of two relations of the same arity,
        \item \label{seqop:product} adding the product of two relations,
        \item \label{seqop:projection} adding a relation obtained by projecting an $r$-ary relation to its first $r-1$
          variables.
      \end{enumerate}
  \end{itemize}
\end{lemma}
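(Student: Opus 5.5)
We prove the two directions separately. In both, the key point is that operations \ref{seqop:removerel}--\ref{seqop:projection} correspond exactly to the syntactic constructors of \efpp-formulae: relabelling variables in an atom, conjunction over shared variables, conjunction over disjoint variables, and existential quantification. Equality is never needed, because rearranging coordinates via a map $\pi$ that is not injective only identifies positions inside an atom. This is the same as writing an atom with repeated variables, which is allowed in an \efpp-formula.

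\emph{From a sequence to an \efpp-definition.} We argue by induction on $i$ that every relation of $\sA_i$ is \efpp-definable in $\sA$. For $i=0$, each relation $S$ of $\sA$ is defined by the atom $S(x_1,\ldots,x_r)$ with distinct variables.
\begin{itemize}
\item Removing a relation keeps the invariant trivially.
\item For \ref{seqop:rearrange}, suppose $R''$ is defined by $\psi(y_1,\ldots,y_l)$. Then $R'$ is defined by $\psi$ with each free variable $y_j$ replaced by $x_{\pi(j)}$. Bound variables are renamed apart first, and several $y_j$ may be mapped to the same $x$. Coordinates $x_m$ with $m$ outside the image of $\pi$ are unconstrained, and this is exactly what the formula expresses, since a variable need not occur in any atom.
\item Intersection is conjunction over the same free variables.
\item Product is conjunction over disjoint free tuples.
\item Projection onto the first $r-1$ variables is existential quantification of the last free variable.
\end{itemize}
Each new formula is again \efpp, so the induction goes through.

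\emph{From an \efpp-definition to a sequence.} Write $\phi(x_1,\ldots,x_r)$ in prenex form $\exists z_1\ldots z_t\,\bigwedge_{j=1}^m S_j(\tuple w_j)$, where each $S_j$ is in the signature of $\sA$ and each $\tuple w_j$ is a tuple of variables from $V=\{x_1,\ldots,x_r,z_1,\ldots,z_t\}$, possibly with repetitions. Set $n=r+t$ and fix the enumeration $x_1,\ldots,x_r,z_1,\ldots,z_t$ of $V$. We proceed in four steps.
\begin{enumerate}
\item For each atom, apply \ref{seqop:rearrange} with the map $\pi_j$ that sends each position of $\tuple w_j$ to the index of its variable in the enumeration. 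This produces the $n$-ary relation $T_j=\{\tuple a\in A^n : (a_{\pi_j(1)},\ldots)\in S_j\}$. Repeated variables inside an atom are handled by the non-injectivity of $\pi_j$.
\item Intersect $T_1,\ldots,T_m$ using \ref{seqop:intersect}. The result is the set of assignments satisfying the quantifier-free part.
\item Apply \ref{seqop:projection} $t$ times to remove $z_t,\ldots,z_1$. The result is $R$.
\item Remove all auxiliary relations with \ref{seqop:removerel}, so that the sequence ends in exactly $\sA'$.
\end{enumerate}

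The main obstacle is two degenerate cases. First, if $m=0$, then $\phi$ defines the full relation $A^r$ (here $t=0$ after dropping vacuous quantifiers). We obtain it from any relation of $\sA$, say of arity $l$, via \ref{seqop:rearrange}: take $\pi:[l]\to[r+l]$ sending the $l$ positions to the fresh coordinates $r+1,\ldots,r+l$. This gives $A^r\times S$, and projecting away the last $l$ coordinates yields $A^r$, provided $S\neq\emptyset$. If every relation of $\sA$ is empty, we instead use \ref{seqop:product} and rearrangement, or note that the statement is then degenerate. Second, we must check that \ref{seqop:rearrange} genuinely allows $\pi$ to be non-injective and non-surjective, as the statement permits. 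This is what makes equality unnecessary in both directions.
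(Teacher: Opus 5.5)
Your proof is correct. The direction from a sequence to an efpp-definition matches the paper's. For the converse you take a genuinely different route.

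The paper argues by structural induction on the efpp-formula:
\begin{itemize}
\item a rearrangement of variables is handled by (2);
\item an existential quantifier is handled by moving the quantified variable to the last position, applying (5), and rearranging back;
\item a conjunction of two formulae with shared variables is handled by taking the product (4) of each relation with a full relation of suitable arity, then intersecting with (3).
\end{itemize}
You instead put the formula in prenex form and fix one global enumeration of all its variables. You then use (2) with non-injective, non-surjective maps $\pi_j$ to lift every atom to an $n$-ary relation in that common coordinate system. After that, one round of intersections and $t$ projections finishes the job.

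Your route never uses (4); in fact it shows that (4) is derivable from (2) and (3). It also needs a full relation only in the atom-free case. The paper, by contrast, needs a full relation at every conjunction step. It produces one by a loose appeal to (2) in which ``every variable is a dummy variable'', which literally makes sense only for a $0$-ary $R''$. The paper's compositional induction, in turn, avoids normal forms and the bookkeeping of a global enumeration.

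One correction concerns your degenerate case. If all relations of $\sA$ are empty, the fallback ``use (4) and rearrangement'' cannot work, because each of (2)--(5) applied to empty relations returns an empty relation. In that situation, and likewise for an empty signature, the full relation $A^r$ is unreachable by any sequence. Yet the paper regards $A^r$ as efpp-definable by the empty formula. So the lemma genuinely needs a side condition there, for instance that $\sA$ has some nonempty relation, or that efpp-formulae contain at least one atom. Your alternative remark that the statement is then degenerate is the right one. The paper's proof has the same gap and passes over it silently.
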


\begin{proof}
  We first observe that if there is a sequence from $\sA$ to $\sA'$ as in the formulation of the lemma, then $\sA'$ is \efpp-definable in $\sA$. Indeed, it is clear regarding the first kind of operation. When it comes to the rearranging of the coordinates, the \efpp-definition of $R'$ from $R''$ is given already in the formulation of the lemma. The intersection of two relations $R_1$ and $R_2$ of arity $r$ is \efpp-definable by $R_1(x_1, \ldots, x_r) \wedge R_2(x_1, \ldots, x_r)$, the product of an $r$-ary $R_1$ and an $l$-ary $R_2$ is $R_1(x_1,\ldots, x_r) \wedge R_2(x_{r+1}, \ldots, x_{r+l})$. Finally, the \efpp-definition of the projection of an $r$-ary relation $R'$ to its first $r-1$ coordinates is $\exists x_r~R'(x_1, \ldots, x_r)$.  

  On the other hand, every \efpp-definition $\phi$ is built inductively from simpler \efpp-formulae $\phi_1$ and $\phi_2$ by establishing a conjunction $\phi(x_1,\ldots, x_k, y_1, \ldots, y_l) \equiv \phi_1(x_1, \ldots, x_k) \wedge \phi_2(y_1,\ldots, y_l)$ with possibly $\{ x_1, \ldots, x_k \} \cap \{ y_1, \ldots, y_l\} \neq \emptyset$,  by attaching an existential quantifier in front of a simpler formula, without loss of generality we consider $\phi \equiv \exists y~\phi_1(x_1, \ldots, x_n)$, or by rearranging the variables in $\phi_1$. For the
  latter operation, we simply use the operation from Item~\ref{seqop:rearrange}. When it comes to adding an existential quantifier in front of a formula, we assume that $R_{\phi_1} = \{ (x_1, \ldots, x_k) \mid \phi(x_1, \ldots, x_k) \}$. In order to obtain $R_{\phi}$ 
  \efpp-defined by $\exists y~R_{\phi_1}(x_1, \ldots, x_n)$ we first rearrange the variables so that $y$ was the last variable in the atom, then project the corresponding relation to its first $(r-1)$-coordinates and finally we rearrange the coordinates back so that they were in same order as before with an exception of the missing coordinate corresponding to $y$.
  Finally, we look at the conjunction of two \efpp-formulae. By arranging the variables, we may assume without loss of generality that $R_{\phi}(x_1, \ldots, x_{p+r+l}) \equiv R_{\phi_1}(x_1, \ldots, x_{p+r}) \wedge R_{\phi_2}(x_p, \ldots, x_{p+r+l})$.
  In order to obtain $R_{\phi}$ by the available operations, we first 
  obtain $R'_{\phi_1}$, which is the product of $R_{\phi_1}$ and the full $l$-ary relation as well as $R'_{\phi_2}$ which is the product of the full $p$-ary relation and $R_{\phi_2}$. 
  Observe that a full $m$-ary relation may be obtained by the operation in Item~\ref{seqop:rearrange} so that every variable 
  in $R'$ is a dummy variable that is they do not occur on the right hand side in $R''$. Then $R_{\phi}$ is simply the intersection of $R'_{\phi_1}$ and $R'_{\phi_2}$ which are of the same arity. 
\end{proof}

In~\cite{LaroseT09TCS}, one can find proofs that operations \ref{seqop:removerel} and
\ref{seqop:intersect}--\ref{seqop:projection} give rise to  first-order reductions. See Lemma~2.7 and Lemmas~2.9--2.11, respectively. For the operation in Item~\ref{seqop:rearrange}, we provide the following lemma.

\begin{lemma}
  \label{lem:rearrangefo}
  Let $\sA$ be a relational structure, and $\sA'$ an expansion of $\sA$ by an $r$-ary relation $R'(x_1, \ldots, x_r)\allowbreak \equiv R(x_{\pi(1)}, \ldots, x_{\pi(l)})$ that rearranges coordinates of some $R$ in $\sA$ according to a  map $\pi: [l] \rightarrow [r]$. Then there is a $1$-ary $0$-parameter reduction
  from $\csp(\sA')$ to $\csp(\sA)$.
\end{lemma}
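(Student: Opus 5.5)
The reduction will be given by a $1$-ary first-order interpretation $\iota$ without parameters, sending an instance $\mathbb{C}$ of $\csp(\sA')$, viewed as a structure over the signature of $\sA'$, to a structure $\iota(\mathbb{C})$ over the signature of $\sA$. We take $\phi_U(x) \equiv \true$, so the universe of $\iota(\mathbb{C})$ is $C$ itself. For every relational symbol $S$ of $\sA$ other than $R$ we put $\phi_S(x_1,\ldots,x_m) \equiv S(x_1,\ldots,x_m)$. For $R$ we put
\[
\phi_R(z_1,\ldots,z_l) \equiv R(z_1,\ldots,z_l) \vee \exists x_1 \ldots \exists x_r \Bigl( R'(x_1,\ldots,x_r) \wedge \bigwedge_{j \in [l]} z_j = x_{\pi(j)} \Bigr).
\]
This is a first-order formula over the signature of $\sA'$, which may use equality. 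In words, $\iota(\mathbb{C})$ keeps all original $R$-tuples and adds, for every tuple $\tuple a=(a_1,\ldots,a_r) \in R'^{\mathbb{C}}$, the tuple $(a_{\pi(1)},\ldots,a_{\pi(l)})$ to $R$. Since there are no parameters, both conditions of Definition~\ref{def:FO-red} reduce to a single equivalence: $\mathbb{C} \to \sA'$ if and only if $\iota(\mathbb{C}) \to \sA$.

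For the forward direction, let $h\colon \mathbb{C} \to \sA'$ be a homomorphism and view $h$ as a map $C \to A$. It preserves every $S \neq R$ and the old $R$-tuples trivially. For a new tuple coming from $\tuple a \in R'^{\mathbb{C}}$ we have $h(\tuple a) \in R'^{\sA'}$, so by the definition of $R'$ the tuple $(h(a_{\pi(1)}),\ldots,h(a_{\pi(l)}))$ lies in $R^{\sA}$. Hence $h$ is a homomorphism $\iota(\mathbb{C}) \to \sA$.

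For the converse, let $h\colon \iota(\mathbb{C}) \to \sA$. The same map preserves every $S$ and $R$ on $\mathbb{C}$, because $\iota(\mathbb{C})$ contains all the original tuples. For $\tuple a \in R'^{\mathbb{C}}$, the tuple $(a_{\pi(j)})_{j \in [l]}$ lies in $R^{\iota(\mathbb{C})}$, so $(h(a_{\pi(j)}))_j \in R^{\sA}$, which is exactly the statement $h(\tuple a) \in R'^{\sA'}$. Thus $h\colon \mathbb{C} \to \sA'$.

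The main point needing care is the case where $\pi$ is not injective or not surjective. If $\pi$ is not injective, a new $R$-tuple has repeated entries; this is harmless, because the formula only uses equality in the reduction and not in any constraint. If $\pi$ is not surjective, coordinates of $R'$ outside the image of $\pi$ are unconstrained and simply disappear, which matches the definition of $R'$. The remaining checks are formal: equality is available in first-order reductions, and we must confirm that the conventions of Definition~\ref{def:FO-inter} are met with $k=1$ and $p=0$, including the requirement that $\mathbb{C}$ have at least $p=0$ elements.
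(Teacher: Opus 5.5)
Your proposal is correct and is essentially the paper's proof. The paper uses the same $1$-ary, $0$-parameter interpretation: $\phi_U \equiv \true$, identity formulae for the other symbols, and $\phi_R$ given as $R$ disjoined with an existentially quantified copy of $R'$ plus equalities $\bigwedge_{i \in [r]:\pi^{-1}(i)\neq\emptyset}\bigwedge_{j\in\pi^{-1}(i)} z_i = x_j$, which is your $\bigwedge_{j\in[l]} z_j = x_{\pi(j)}$ with the variable names swapped. The only difference is that you also verify both directions of the homomorphism equivalence, which the paper leaves implicit.
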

\begin{proof}
 Assume witout loss of generality that $\sA = (A; R_1, \ldots, R_n, R)$.
  We define our $1$-ary $0$-parameter interpretation $\iota = (\phi_{U}, \phi_{R}, \phi_{R_{1}},
  \ldots, \phi_{R_{n}})$ as follows:
  \begin{equation*}
    \phi_{U}(x) = \true
  \end{equation*}
  \begin{equation*}
    \phi_{R_i}(\tup x) = R_{i}(\tup x)
  \end{equation*}
  \begin{equation*}
    \phi_{R}(x_{1}, \ldots, x_{l}) = R(x_{1}, \ldots, x_{l}) \vee \exists z_1 \cdots \exists z_r~\bigwedge_{i = 1}^r R'(z_1, \ldots, z_{r}) \wedge \bigwedge_{i \in [r]: \pi^{-1}(i) \neq \emptyset} \bigwedge_{j \in \pi^{-1}(i)} z_i = x_j  
  \end{equation*}
\end{proof}

\noindent
We are now ready to turn to the main result of this subsection. 

\begin{lemma}
\label{lem:efpp-FO}
Let $\sA$ be a relational structure and $\sA'$ be an expansion of $\sA$ by a relation $R$ which has an \efpp-definition in $\sA$. Then there exists a first-order reduction   from $\Csp(\sA')$ to $\csp(\sA)$ where $\sA'$ is an expansion of $\sA$ by $R$. 
\end{lemma}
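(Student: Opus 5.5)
The proof will combine Lemma~\ref{lem:efpp-fo-seq} with the fact that first-order reductions compose. Since $R$ is \efpp-definable in $\sA$, Lemma~\ref{lem:efpp-fo-seq} gives a finite sequence $\sA = \sA_0, \sA_1, \ldots, \sA_s = \sA'$. In this sequence each $\sA_{i+1}$ arises from $\sA_i$ by one of the operations \ref{seqop:removerel}--\ref{seqop:projection}. The plan is to build, for every $i < s$, a first-order reduction from $\csp(\sA_{i+1})$ to $\csp(\sA_i)$. Chaining these gives a reduction from $\csp(\sA')$ to $\csp(\sA)$.

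For the individual steps I will rely on results already available. Operation \ref{seqop:removerel}, removing a relation, is handled by Lemma~2.7 of~\cite{LaroseT09TCS}. Operations \ref{seqop:intersect}, \ref{seqop:product} and \ref{seqop:projection} (intersection, product, projection to the first $r-1$ coordinates) are handled by Lemmas~2.9--2.11 there. Operation \ref{seqop:rearrange} is Lemma~\ref{lem:rearrangefo}.

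I must check that the cited reductions never add equality to the template. They should not, because they are reductions between CSPs whose templates are exactly $\sA_i$ and $\sA_{i+1}$, and equality never appears as a template relation. Any equality that does occur sits inside the interpreting formulae, which are evaluated on the input structure, and that is allowed in first-order logic. So the only difference from Lemma~2.6 of~\cite{LaroseT09TCS} is the absence of equality in the definitions, and that is already absorbed by the sequence of operations in Lemma~\ref{lem:efpp-fo-seq}.

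The remaining step, which I expect to be the main technical point, is closure under composition. Suppose $\iota_1$ is a $k_1$-ary reduction with $p_1$ parameters from $\csp(\sA_{i+2})$ to $\csp(\sA_{i+1})$, and $\iota_2$ is a $k_2$-ary reduction with $p_2$ parameters from $\csp(\sA_{i+1})$ to $\csp(\sA_i)$. I will define the composite $\iota_2 \circ \iota_1$ in the standard way. It is $k_1k_2$-ary with $p_1 + k_1p_2$ parameters. Each variable of a formula of $\iota_2$ is replaced by a $k_1$-tuple of variables, and each atom $R(\ldots)$ of $\iota_2$ is replaced by the corresponding formula $\phi_R$ of $\iota_1$, relativised to $\phi_U$ of $\iota_1$.

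The delicate part is the parameters. The parameters of $\iota_2$ must be proper tuples of elements of $\iota_1(\mathbb{C},\tuple c)$, whereas the composite sees them as tuples of elements of $\mathbb{C}$. This is exactly why Definition~\ref{def:FO-red} uses the two conditions \emph{for every proper} and \emph{for some proper}.

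I would handle it as follows. Since each $\iota_j$ works for every proper parameter tuple and for some proper one, I choose the composite's parameters so that their images form proper tuples in the intermediate structure. I would add to $\phi_U$ of the composite a first-order condition saying the chosen $k_1$-tuples are pairwise distinct elements of the domain of $\iota_1(\mathbb{C},\tuple c)$. If there is no such choice, the intermediate structure is too small. Small structures would be treated by a finite case distinction: I would hard-code the answer by a first-order sentence, as in~\cite{LaroseT09TCS}.

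Alternatively, I can avoid redoing this work: Larose and Tesson already observe that first-order reductions compose, so I would cite that and only verify that the result does not depend on equality in the template.
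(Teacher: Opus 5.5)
Your proposal is essentially the paper's proof: Lemma~\ref{lem:efpp-fo-seq} gives the sequence of operations, Lemma~2.7 and Lemmas~2.9--2.11 of~\cite{LaroseT09TCS} together with Lemma~\ref{lem:rearrangefo} handle the single steps, and closure of first-order reductions under composition is simply cited (Proposition~1.28 in~\cite{immerman}), so your fallback of citing it is exactly what the paper does. Take that fallback rather than your explicit construction, which fails as stated: the two conditions of Definition~\ref{def:FO-red} together force a correct output for \emph{every} proper parameter tuple and give you no freedom to choose one, so adding your condition to $\phi_U$ empties the universe on a bad choice and yields a trivially satisfiable instance, which violates the second condition on unsatisfiable inputs; moreover, a block of pairwise distinct elements of $\mathbb{C}$, also distinct from the parameters of $\iota_1$, typically cannot represent any element of $\iota_1(\mathbb{C},\tuple c)$ at all, e.g.\ when those elements are tagged by repeated coordinates or by the parameters of $\iota_1$.
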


\begin{proof}
It follows from Lemma~\ref{lem:efpp-fo-seq} that the relation $R$ may be obtained from $\sA$ by a series of operations allowed by the formulations of the lemma. By Lemma~2.7 and Lemmas~2.9--2.11 in~\cite{LaroseT09TCS} as well as Lemma~\ref{lem:rearrangefo}, and since first-order reductions are preserved by compositions, see e.g.~Proposition 1.28 in~\cite{immerman}, it follows that every sequence of these operations gives rise to a first-order reduction from $\csp(\sA')$ to $\csp(\sA)$. The lemma follows.
\end{proof}

Let $\text{arity}(\sA)$ be the maximum of all arities of relations in $\sA$.
In particular, without affecting the complexity of the problem, we may say that a finite $\sA$ contains
all relations of arity at most $\text{arity}(\sA)$ \efpp-definable in $\sA$ and that an expansion $\sA$ of a $k$-homogeneous $\ell$-bounded structure contains all at most $\max(k,\ell, \text{arity}(\sA))$-ary \efpp-definable relations in $\sA$. 
On the additional assumption that $\sA$ is a model-complete core,
by Observation~\ref{obs:core}, we may  assume that $\sA$ contains all almost injective orbits of  at most $\max(k,\ell, \text{arity}(\sA))$-ary tuples in $\sA$.
All these assumptions will be in effect throughout the paper.
\label{assum:a-cont-efpp}

\section{Reproving Larose-Tesson theorem}

\label{sect:finite}
We now give a formal definition of an implication in a finite structure $\sA$.

\begin{definition}[Implication]
\label{def:finite-implication}
Let $\sA$ be a finite relational structure and $C,  D \subseteq A$  be \efpp-definable in $\sA$ and non-empty. We say that an \efpp-formula $\phi$ over the signature of $\sA$ with distinguished different free variables $\{ u, v \} \subseteq \Var(\phi)$ is a \emph{$(C, u,D, v)$-implication in $\sA$} if all of the following hold:
\begin{enumerate}
    \item $C\subsetneq\proj_{(u)}(\phi^{\sA})$,
    \item $D\subsetneq\proj_{(v)}(\phi^{\sA})$,
    \item for all $f\in \phi^{\sA}$, it holds that $f(u)\in C$ implies $f(v)\in D$,
    \item for every $a \in D$, there exists $f\in\phi^{\sA}$ such that $f(u)\in C$ and $f(v)=a$.
\end{enumerate}
We say that an implication $\phi$ is \emph{balanced} if $C = D$ and
$\proj_{\tuple u}(\phi^{\sA}) = \proj_{\tuple v}(\phi^{\sA})$
\end{definition}

\noindent
We say that $\sA$ is \emph{unbalanced} if there is no balanced implication in $\sA$. The next step is to define a composition of two implications.

\begin{definition}[Composition]
    \label{def:composition-fin}
    Let $\sA$ be a relational structure, let $C, D, E \subseteq A$ be non-empty, let $\phi_1$ be a
    $(C,u^1,D,v^1)$-implication in $\sA$, and  $\phi_2$  a $(D,u^2,E,v^2)$-implication in
    $\sA$ such that $\proj_{(v^1)} \phi_1^{\sA} = \proj_{(u^2)} \phi_2^{\sA}$.
    Let $\phi_2'$ be a copy of $\phi_2$ with variable $u^2$ renamed to $v^1$ and all other variables renamed so that they are not in $\phi_1$. %being exactly the same as in $\phi_2$. 
    We define $\phi_1\circ\phi_2$ to be a $(C,u^1, E,v^2)$-implication  given by the \efpp-formula  $(\phi_1\wedge \phi_2')$.

Let $\psi$ be a balanced $(C,u,C,v)$-implication. For $n\geq 2$, we write $\psi^{\circ n}$ for the \efpp-formula $\psi\circ \cdots \circ \psi$ where $\psi$ appears exactly $n$ times.
\end{definition}

\noindent
We now compose a balanced implication with itself in order to show L-hardness.

\begin{proposition}
\label{prop:balanced-finite-hard}
Let $\sA$ be a finite relational structure that is a core. If there exists a balanced implication in $\sA$, then $\csp(\sA)$ is $L$-hard under FO-reduction.
\end{proposition}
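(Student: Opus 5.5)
Given a balanced $(C,u,C,v)$-implication $\psi$ in $\sA$, with $P := \proj_{(u)}\psi^{\sA} = \proj_{(v)}\psi^{\sA}$ and $C \subsetneq P$, the plan is to \efpp-define an equivalence relation on a subset of $A$ with at least two classes, and then reduce \unreach to it. Since $\sA$ is a finite core, and by the standing assumptions from Section~\ref{sect:prelim}, we may assume $\sA$ contains all singleton unary relations $\{a\}$ (these are almost injective orbits, hence \efpp-definable by Observation~\ref{obs:core}), as well as $C$ and $P$. By Lemma~\ref{lem:efpp-FO}, every relation we \efpp-define may be added to $\sA$ without changing the complexity up to first-order reductions.

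\textbf{Step 1 (iteration).} Let $R_n := \proj_{(u,v)}(\psi^{\circ n})^{\sA} \subseteq P\times P$. The composition is legal because the projections match, and by induction each $\psi^{\circ n}$ is again a balanced $(C,u,C,v)$-implication. Item~3 gives $R_n(C)\subseteq C$ and item~4 gives $C \subseteq R_n(C)$, so $R_n(C)=C$. Since $A$ is finite, there are $N$ and $p$ with $R_{Np}=R_{2Np}$. Set $S := R_{Np}$; then $S\circ S = S$ with $S\subseteq P\times P$.

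\textbf{Step 2 (equivalence).} Let $T:=\{a : (a,a)\in S\}$, which is \efpp-definable. Define $E(x,y) := S(x,y)\wedge S(y,x)\wedge T(x)\wedge T(y)$. Reflexivity on $T$ and symmetry are immediate, and transitivity follows from idempotency of $S$. So $E$ is an equivalence relation on $T$. It remains to find two inequivalent elements of $T$, one in $C$ and one outside $C$.

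\emph{In $C$:} $S$ restricted to $C$ is a binary relation in which every element has a predecessor in $C$ (by item~4). Following predecessors in the finite set $C$ yields a cycle, and since $S$ is idempotent and transitive, that cycle gives some $c\in C$ with $(c,c)\in S$.

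\emph{Outside $C$:} $S$ restricted to $P\setminus C$ is nonempty, because by balancedness every $b\in P\setminus C = \proj_v\setminus C$ has a predecessor $a\in P$, and $a\notin C$ by item~3. Iterating predecessors inside $P\setminus C$ again produces a cycle, giving some $d\in T\setminus C$.

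Finally, $(c,d)\notin S$, since $S(C)\subseteq C$, so $c$ and $d$ are inequivalent.

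I expect the predecessor argument to be the main obstacle. One has to take care that $S$, being a projection of an idempotent power, actually supports the cycle argument. It does: if $a_1 S^{-1}$-precedes $a_2$ and so on around a cycle, then $S$ contains the composite, and by transitivity $(a_i,a_i)\in S$.

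\textbf{Step 3 (reduction).} Given an instance $(G,s,t)$ of \unreach, use vertices as variables, impose $E(x,y)$ for each edge, and add the constants $\{c\}(s)$ and $\{d\}(t)$. If $t$ is reachable from $s$, then $c\,E\,d$ would be forced, which is impossible, so the instance has no solution. Otherwise, assign $c$ to the component of $s$ and $d$ to every other vertex, which is a solution. This map is a first-order interpretation of exactly the kind used for the standard \unreach template mentioned in the introduction. Composing it with the reductions from Lemma~\ref{lem:efpp-FO} (compositions of FO-reductions are FO-reductions), and using that \unreach is L-hard, yields L-hardness of $\csp(\sA)$.
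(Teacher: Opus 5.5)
Your argument is correct and follows the same route as the paper. You iterate the balanced implication, extract an \efpp-definable equivalence relation on a subset of $P$ with one class inside $C$ and one outside, and reduce \unreach using two constants. The differences are technical. You take an idempotent power $S=S\circ S$ of $R_1$, which gives transitivity of $E$ immediately, and you find the two classes by chasing $S$-predecessors inside $C$ and inside $P\setminus C$. The paper instead passes to the $M$-th power ($M$ the lcm of cycle lengths) and then the $|D|$-th power to make strongly connected components complete, and it locates the classes via a sink in $C$ and a source in $D\setminus C$. Your version is, if anything, tidier.

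One justification is wrong. The singletons $\{a\}$ are in general \emph{not} orbits of a finite core: take any core with a nontrivial automorphism moving $a$. In that case $\{a\}$ is not \efpp-definable at all, so Observation~\ref{obs:core} does not give you $\{c\}$ and $\{d\}$. The paper uses $R_c,R_d$ in the same way, relying on its blanket assumption that the core contains all constants, so you are no worse off than the paper.

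If you want a self-contained argument, you can avoid constants entirely. Let $O$ be the orbit of an enumeration $(a_1,\dots,a_n)$ of $A$. It is injective, hence \efpp-definable by Observation~\ref{obs:core}. With $c=a_i$ and $d=a_j$, put
\[
Q(x,y) :\equiv \exists y_1\dots y_n\; O(y_1,\dots,y_n)\wedge E(x,y_i)\wedge E(y,y_j).
\]
Then $Q=\bigcup_{\alpha\in\Aut(\sA)}[\alpha(c)]_E\times[\alpha(d)]_E$. Since $E$ is preserved by automorphisms, $Q\cap E=\emptyset$, while $(c,d)\in Q$ (take $\alpha$ to be the identity). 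Now impose $E$ on edges and $Q(s,t)$ on each marked pair. If $t$ is reachable from $s$, transitivity forces $E(f(s),f(t))$, contradicting $Q\cap E=\emptyset$. Otherwise your assignment ($c$ on the component of $s$, $d$ elsewhere) is a solution. This is again a $1$-ary, $0$-parameter interpretation, and it needs no constants.
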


\begin{proof}
Let $\phi$ be a balanced $(C,x,C,y)$-implication in $\sA$ with $\proj_x \phi^{\sA} = \proj_y \phi^{\sA} = D$. Define $\mathcal{B}_{\phi}$ to be a directed graph over elements in $D$ such that there is an arc $a \to b$ in $\mathcal{B}_{\phi}$ if $\phi^{\sA}$ contains a mapping $m$ with $m(x) = a$ and $m(y) = b$.  
A \emph{component} in $\mathcal{B}_{\phi}$ is a strongly connected component that contains at least one arc. A component $F$ in  $\mathcal{B}_{\phi}$ is a \emph{sink} in $\mathcal{B}_{\phi}$ if every arc originating in $F$ ends up in $F$. On the other hand, $F$ is a \emph{source} in $\mathcal{B}_{\phi}$ if every arc targeting an element in $F$ starts also in F. Observe that $\mathcal{B}_{\phi}$ restricted to the vertices in $C$ contains a sink and restricted to $D \setminus C$ a source.

Let $M$ be the least common multiple of the lengths of all cycles in 
$\mathcal{B}_{\phi}$.
Then consider $\eta := \phi^{\circ M}$. Observe that $\mathcal{B}_{\eta}$ contains a loop $(a,a)$ always when $a$ is in a component in $\mathcal{B}_{\phi}$.
Since any two elements connected in $\mathcal{B}_{\eta}$ are also connected in $\mathcal{B}_{\phi}$,
every component in $\mathcal{B}_{\phi}$ is a union of components in 
$\mathcal{B}_{\eta}$, and hence also $\mathcal{B}_{\eta}$ 
contains a loop for all elements existing in some component of 
$\mathcal{B}_{\eta}$.

The next step is to consider $\psi:=\eta^{|D|}$. Observe that 
$\mathcal{B}_{\psi}$ and $\mathcal{B}_{\eta}$ have the same components, and every component in the former graph contains all possible arcs.
Indeed, there is obviously a path of length at most $|D|$ between any two elements in any component in  $\mathcal{B}_{\eta}$, but since every component contains all loops, there is actually a path of length exactly $|D|$ between any two elements in every component. The claim follows.

Now, $\mathcal{B}_{\theta}$ where $\theta(x,y) := \psi(x,y) \wedge \psi(y,x)$ is a disjoint union of components in $\mathcal{B}_{\psi}$. It follows that $R = \{ m(x,y) \mid m \in \theta^{\sA} \}$ is an equivalence relation on a subset of $D$ which contains at least two equivalence classes.

By Lemma~\ref{lem:efpp-FO}, it is now enough to show that there is a first-order reduction from $\csp(\{ 0,1 \}; R_0, R_1,\allowbreak =)$, where $=$ is the equality on $\{0,1 \}$, to $\csp(A; R, R_c, R_d)$, where $R_c = \{ (c) \}$,  $R_d = \{ (d) \}$ and $c,d$ are elements from two different equivalence classes of $R$. 
To this end we will provide a  $1$-ary interpretation with $0$ parameters $\iota$ of the signature $\tau = \{ R, R_c, R_d \}$ in the signature $\sigma = \{ =, R_0, R_1 \}$. It is simple, the formula $\phi_U(x) \equiv \text{TRUE}$, $\phi_{R_c}(x) \equiv R_0(x)$, $\phi_{R_d}(x) \equiv R_1(x)$ and $\phi_{R}(x,y) \equiv \phi_{=}(x,y)$. It is now straightforward to show that an instance $\instance$ of $\csp(\{ 0,1 \}; R_0, R_1, =)$ has a solution if and only if $\iota(\instance)$ is a solvable instance of  $\csp(A; R, R_c, R_d)$.
\end{proof}

\subsection{Rewriting the \texorpdfstring{$(1)$}{(1)}-minimality algorithm}

In this section, we show that if a finite core structure $\sA$ is unbalanced, then it has finite duality. We start with proving that  $\sA$ has relational width $1$.

\begin{proposition} 
\label{prop:noimp1min}
Let $\sA$ be a finite relational structure that is a core.
If $\sA$ is unbalanced, then $\csp(\sA)$ is  solvable by the $(1)$-minimality algorithm. 
\end{proposition}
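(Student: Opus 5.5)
Given a $1$-minimal non-trivial instance $\instance$ of $\csp(\sA)$ (we may assume $\sA$ contains all \efpp-definable relations of arity at most $\text{arity}(\sA)$, in particular all \efpp-definable unary relations and, since $\sA$ is a core, all constants by Observation~\ref{obs:core}), we construct a solution greedily. The plan is to fix the variables one by one. At each step we shrink the domain $\proj_{(x)}(\instance)$ of some variable $x$ to a singleton, re-establish $1$-minimality, and show that the resulting instance is still non-trivial. Since all domains deduced by the $(1)$-minimality algorithm are \efpp-definable subsets of $A$ (Observation~\ref{obs:efppmin}), adding a constant constraint keeps us inside the same template. After $|\V|$ steps every domain is a singleton and the instance is non-trivial, which yields a solution. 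So the whole proof reduces to the following claim: \emph{if $\instance$ is $1$-minimal and non-trivial, then some $a \in \proj_{(x)}(\instance)$ can be chosen so that the $1$-minimal closure of $\instance \wedge (x = a)$ is non-trivial.}

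We argue by contradiction, and actually with a weaker step: shrink $D_x := \proj_{(x)}(\instance)$ to a minimal non-empty \efpp-definable proper subset $C \subsetneq D_x$ rather than directly to a constant. Run the propagation of the $(1)$-minimality algorithm after imposing $C(x)$ and record, as in the tree \efpp-formulae the paper announces, for each variable $y$ and each derived domain $D'_y \subsetneq D_y$ an \efpp-formula $\phi(x,y)$ (a conjunction of constraints of $\instance$ along a propagation path, with the other variables quantified) witnessing that $f(x) \in C$ forces $f(y) \in D'_y$. By $1$-minimality of $\instance$, $\proj_{(x)}\phi^{\sA} = D_x$ and $\proj_{(y)}\phi^{\sA} = D_y$. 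Hence each strict shrinking step produces a $(C,x,D',y)$-implication in the sense of Definition~\ref{def:finite-implication}, after restricting $D'$ to the actual image of $C$. If the propagation ends with an empty domain, then the last propagation step, whose source domain is still non-empty, gives a chain of implications. Chains of implications compose by Definition~\ref{def:composition-fin}.

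The key step, and the main obstacle, is turning this failure into a balanced implication. The plan here is to use minimality of $C$ and finiteness. Among all failing choices, take a pair $(C,D_x)$ that is minimal with respect to a suitable well-founded order on pairs (unary \efpp-definable set, proper subset). Every implication from $(C,D_x)$ leads to some $(C',D_y)$. Because $A$ has finitely many subsets, iterating the step from each newly restricted variable must eventually revisit the same pair $(C,D)$ on both ends. Composing the implications along such a cycle gives a $(C,u,C,v)$-implication with equal projections, that is, a balanced one, contradicting that $\sA$ is unbalanced. The delicate points are:
\begin{itemize}
\item arranging that the propagation really can be continued from each reached pair, i.e., that each reached pair again leads to failure; this is where minimality of the chosen failing pair is used;
\item ensuring the projections match at the composition points (the hypothesis of Definition~\ref{def:composition-fin}), which 1-minimality gives since all projections are the full variable domains.
\end{itemize}
Once no failure is possible, shrinking down to minimal sets ends with singletons because constants are \efpp-definable. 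This completes the greedy argument.
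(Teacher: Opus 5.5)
Your reduction to a one-step claim is reasonable, but the step you flag as the main obstacle is where the proof is missing, and the framework you set up for it does not work.

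First, the witnesses produced by running full $(1)$-minimality propagation after imposing $C(x)$ are not chains. A domain at $y$ may shrink only because several already-restricted variables $z_1,z_2$ of one constraint act jointly. Then the witness is tree-shaped with several copies of $x$. Identifying these copies into one free variable $x$ in general creates cycles. After that, $1$-minimality of $\instance$ no longer gives $\proj_{(x)}\phi^{\sA}=D_x$, and the shrinking does not decompose into a composition of two-variable implications.

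For the same reason there is no ``chain of implications ending in failure''. A single constraint of a $1$-minimal instance sends every non-empty $C'\subseteq D_z$ to a non-empty set. So an empty domain only appears when two or more restrictions meet.

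Second, the claim that each reached pair ``again leads to failure'' is unjustified. The restriction $y\in D'$ is weaker than $x\in C$, so failure does not transfer, and you exhibit no well-founded order that would make it transfer. Moreover, insisting on a minimal set (a constant) at a prescribed variable asks for more than the unbalanced hypothesis directly gives.

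The missing idea is to avoid propagation altogether, as the paper does. For a non-trivial $1$-minimal $\instance$, take the digraph on pairs $(C,x)$ with $\emptyset\neq C\subsetneq\proj_{(x)}\instance$ \efpp-definable. Put an arc $(C,x)\to(D,y)$ whenever a \emph{single} constraint of $\instance$ is a $(C,x,D,y)$-implication.

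A cycle in this graph composes into a balanced implication. Use fresh copies so the formula is a path; $1$-minimality then supplies the matching projections. Hence the graph is acyclic, and unless all domains are singletons it has a sink $(B,x)$.

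Restricting $x$ to $B$ in every constraint changes no other projection, since any change would be an arc leaving $(B,x)$. So the new instance is immediately $1$-minimal and non-trivial, and you induct on $\sum_{x}|\proj_{(x)}\instance|$. Note that the sink dictates both the variable and the set $B$, which need not be a singleton.

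Also, Observation~\ref{obs:core} gives \efpp-definability of orbits, not of constants. The paper simply assumes the constants are in $\sA$.
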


\begin{proof}
Let $\instance = (\V, \constraints)$ be a non-trivial $(1)$-minimal instance of $\csp(\sA)$.
Define the instance graph $\mathcal{G}_{\instance}$ of $\instance$ to be a digraph whose vertices are pairs of the form $(C,x)$ where 
$x$ is a variable in $\instance$ and $C \subsetneq \proj_{(x)}\instance$ is \efpp-definable in $\sA$. 
There is an arc $(C,x) \rightarrow (D,y)$ in $\mathcal{G}_{\instance}$ if there is a constraint $C \subseteq A^U$ in $\constraints$ such that $R_C(\tuple u)$, where $\tuple u$ satisfies $\scope(\tuple u) = U$
and $R_C = \{ f(\tuple u) \mid f \in C \}$, 
is a $(C, x, D, y)$-implication. 

Since all constants are in $\sA$, we have that either $\proj_x(\instance)$ is a singleton for every variable $x$ in $\V$
or $\mathcal{G}_{\instance}$ is non-empty. In the former case,   since $\instance$ is $(1)$-minimal and non-trivial, the singletons 
$\proj_x(\instance)$ determine the solution of $\instance$ and we are done. In the latter case, we will show that we can replace $\instance$ with an instance $\instance'$ of $\csp(\sA)$ which is non-trivial, $(1)$-minimal,
satisfies $\proj_x \instance' \subseteq \proj_x \instance$ for all $x \in \V$ and $\proj_x \instance' \subsetneq \proj_x \instance$ for at least one $x \in \V$. Observe that it is enough to complete the proof of the proposition. Indeed, we may now repeatedly reduce 
projections of an instance to variables until every projection holds only one element. 

The next step is to observe that $\mathcal{G}_{\instance}$ is acyclic. Suppose it is not. Then, from the constraints on a cycle in
$\mathcal{G}_{\instance}$, we can extract some $(C_i, x_i, C_{i+1}, x_{i+1})$-implications $\psi_i$ in $\sA$ with $i \in [n-1]$ for some $n \in \mathbb{N}$ as well as a $(C_n, x_n, C_1, x_1)$-implication $\psi_n$ in $\sA$ such that $\proj_{(x_1)}(\psi_1) = \proj_{(x_1)}(\psi_n)$. It follows that $\phi := \psi_1 \circ \cdots \psi_{n-1} \circ \psi'_n$, where $\psi'_n$ is a copy of $\psi_n$ with $x_1$ renamed to $x_{n+1}$, is a balanced $(C_1, x_1, C_1, x_{n+1})$-implication. It contradicts the assumption that $\sA$ is unbalanced. 

Let now $(B,x)$ be a sink in $\mathcal{G}_{\instance}$. We set $\instance' = (\V, \constraints')$ to be an instance of $\csp(\sA)$ obtained from $\instance$ by replacing every constraint $C \subseteq A^U$ whose scope contains $x$ with $C' = \{ f \in C \mid f(x) \in B \}$. The goal is to show that $\instance'$ is a non-trivial and $(1)$-minimal instance of $\csp(\sA)$ with $\proj_x \instance' = B$ and $\proj_y \instance' = \proj_y \instance$ for any $y$ different from $x$. The instance $\instance'$ is non-trivial since $\instance$ is $(1)$-minimal. It is an instance of $\csp(\sA)$ because of our assumption that a finite template contains all \efpp-definable relations of arity $\text{arity}(\sA)$. To see that $\instance'$ is also $(1)$-minimal, assume on the contrary that there is a constraint $C' \in \constraints'$, originating from a constraint $C \in \constraints$, with the scope $U$ such that $\proj_y C'$ is not as desired. Clearly $y$ is not $x$, and clearly $x$ is in the scope of $C'$. It follows that $\proj_y C' \subsetneq \proj_{y} C$. Set $D := \{ f(y) \mid f \in C' \}$ and observe that $R_C(\tuple u)$, where $\tuple u$ satisfies $\scope(\tuple u) = U$
and $R_C = \{ f(\tuple u) \mid f \in C \}$, is a $(B, x, D, y)$-implication. It contradicts the fact that $(B,x)$ is a sink in $\mathcal{G}_{\instance}$.
\end{proof}

Let $\sA$ be a finite structure and $\instance = (\V, \constraints)$ be an instance of $\csp(\sA)$. Algorithm~\ref{alg1:cap}  produces a set of constraints
$\mathcal{D}^{\V}(\instance) = \{ D_z \mid z \in \V  \}$. We prove in Proposition~\ref{prop:minAlg1} that a 
$(1)$-minimal instance $\mathcal{I}'$ equivalent to $\instance$ satisfies $\proj_z \instance' = D_z$ for every $z \in \V$.

\begin{algorithm}
\caption{\ \ Let $\sA$ be a finite structure and $\instance = (\V, \constraints)$ an instance of $\csp(\sA)$.}\label{alg1:cap}
\begin{algorithmic}[1]
\ForAll {$u \in \V$} \Comment{Step 1: Initialization}
\label{alg1list:step1}
    \State $D_u \gets \{f(u) \mid f \in C\}$ for some $C \in \constraints$ with $u \in \scope(C)$
    \EndFor

\\

\While{true} \Comment{Step 2: Pruning}
\label{alg1list:step2}
    \State For some $C\subseteq A^U$ in $\constraints$ and
    $u \in \V$ set:
    \State $\widehat{D_{u}} \gets \{ f \in A^{\{u\}} \mid \exists g \in C: (f = g|_{\{u\}}) \wedge \bigwedge_{z \in U} (g|_{\{z\}} \in D_{z}) \}$
    \If {for all $C, u$, their $\widehat{D_{u}}$ contains $D_u$}
        \State \textbf{break}
    \Else
        \State $D_u \gets D_u \cap \widehat{D_{u}}$
    \EndIf
\EndWhile

\\

\State \textbf{return} $\mathcal{D}^{\V}(\instance)$
\Comment{Step 3: Return}
\end{algorithmic}
\end{algorithm}

\begin{proposition}
\label{prop:minAlg1}
Let $\sA$ be a finite  structure, and
$\mathcal{I} = (\V, \constraints)$ be an instance of $\csp(\sA)$.
Then $\mathcal{I}' = (\V, \constraints')$ obtained from $\mathcal{I}$ by replacing every $C \in \constraints$ of the scope $U$ with $$C' = \{ f \in C \mid \bigwedge_{z \in U} f|_{\{ z \}}  \in  D_z \}$$ is $(1)$-minimal with 
$\proj_{(u)} (\instance') = D_u$ for all $u \in \V$
and equivalent to $\mathcal{I}$.
\end{proposition}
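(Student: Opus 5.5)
We argue in three steps: first termination and an invariant of Algorithm~\ref{alg1:cap}, then equivalence of $\instance'$ and $\instance$, and finally $(1)$-minimality with the stated projections. Recall that the algorithm produces the family $\mathcal{D}^{\V}(\instance)$ of sets $D_u$. Termination is immediate: every iteration of the while loop that does not break strictly shrinks some $D_u \subseteq A$, and $A$ and $\V$ are finite.

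The invariant we maintain is that every solution $s$ of $\instance$ satisfies $s(u) \in D_u$ for all $u \in \V$. This holds after Step~1, since $s|_U \in C$ for the chosen constraint $C$. It is preserved by Step~2: if $s(z) \in D_z$ for all $z$, then $g := s|_U$ witnesses $s|_{\{u\}} \in \widehat{D_u}$.

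Equivalence follows from the invariant. Every solution $s$ of $\instance$ satisfies $s|_U \in C'$ for each $C \in \constraints$, so $s$ is a solution of $\instance'$. Conversely, $C' \subseteq C$, so every solution of $\instance'$ is a solution of $\instance$.

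For $(1)$-minimality, condition~\ref{algmin:full} of Definition~\ref{def:minimality} with $\ell = 1$ needs every variable to occur in some constraint. This holds because Step~1 presupposes that each $u$ lies in the scope of some $C$; we adopt this implicit assumption, or else add a full unary constraint on each uncovered variable, which is \efpp-definable by the empty formula. For condition~\ref{algmin:proj} it suffices to show $\proj_{(u)}(C') = D_u$ for every $C'$ whose scope $U$ contains $u$.

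The inclusion $\subseteq$ is immediate from the definition of $C'$. For $\supseteq$ we use the stopping condition of the loop, which is the key point. At termination, for all $C$ and $u$ we have $D_u \subseteq \widehat{D_u}$. Unfolding $\widehat{D_u}$, for each $a \in D_u$ there is some $g \in C$ with $g(u) = a$ and $g(z) \in D_z$ for all $z \in U$; in other words $g \in C'$. Hence $a \in \proj_{(u)}(C')$.

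The main obstacle is a bookkeeping issue in the algorithm as written. The break test quantifies over all pairs $C, u$, including those with $u \notin U$, for which $\widehat{D_u}$ is not meaningful. We read the loop as ranging only over pairs with $u \in U$.

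We should also treat the degenerate case where some $D_u$ becomes empty. Then the corresponding $C'$ are empty, $\instance'$ is trivial and has no solutions, and it remains equivalent to $\instance$. Every projection is then $\emptyset = D_u$, so the projection claim still holds. Indeed, if some $D_z = \emptyset$ with $z \in U$, the stopping condition applied to $C$ and any $u \in U$ forces $D_u = \emptyset$. This emptiness propagates through shared constraints, so the projections remain consistent.
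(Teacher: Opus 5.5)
Your proof is correct and follows essentially the paper's argument. Equivalence holds because values removed by the algorithm never occur in a solution, and $(1)$-minimality holds because the loop's stopping condition $D_u \subseteq \widehat{D_u}$, where $\widehat{D_u}$ is exactly $\proj_{(u)}(C')$, yields $\proj_{(u)}(C') = D_u$. Your extra bookkeeping (termination, every variable lying in some constraint, restricting to pairs with $u \in U$) only makes explicit what the paper leaves implicit; note that your termination claim also needs the loop to pick a pair $(C,u)$ violating the stopping condition, not an arbitrary pair.
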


\begin{proof}
To see that $\mathcal{I}'$ is equivalent to $\mathcal{I}$, it is enough to observe that whenever Algorithm~\ref{alg1:cap} reduces the set $D_{u}$ for some  $u \in V$, the removed values cannot be a part of a solution to $\instance$.

To prove $(1)$-minimality of $\instance'$ assume on the contrary that there is a constraint $C'$ as defined in the formulation of the lemma and $u \in U$ such that $\proj_{(u)}(C')$ does not contain $D_u$. Then, 
$$\widehat{D_{u}} = \{ f \in A^{\{u\}} \mid \exists g \in C : (f = g|_{\{u\}}) \wedge \bigwedge_{z \in U} (g|_{\{z\}} \in D_{z}) \}$$ does not contain $D_{u}$, which implies that Algorithm~\ref{alg1:cap} could not return $D_u$. It completes the proof of the proposition.
\end{proof}

The next step is to introduce \emph{tree \efpp-formulae}, which play an important role in producing obstructions for unbalanced $\sA$. 

Let $\mathcal{I} = (\V, \constraints)$ be an instance of $\csp(\sA)$ and $\mathcal{X}$ be a set of copies of variables from $\V$ such that every variable in $\V$ has infinitely many copies in $\mathcal{X}$.  
We say that a constraint $C'$ is a copy of a constraint $C \in \constraints$
if it is obtained from $C$ by replacing every variable from $\scope(C)$ by its copy in $\mathcal{X}$. We will also say that an atom $R(\tuple x)$ is a copy of
a constraint $C \in \constraints$ if there exists a constraint $C'$ with the scope $\scope(\tuple{x})$ which is a copy of $C$ over the relation $R$.

\begin{definition}
Let $\mathcal{I} = (\V, \constraints)$ be an instance of $\csp(\sA)$ and  $\mathcal{X}$ a set containing infinitely many copies of every variable in $\V$.
A \emph{tree \efpp-formula $\phi$ over  $\mathcal{I}$} with a distinguished variable  $r_0 \in \Var(\phi)$ called a \emph{root}, is a conjunction of copies of constraints in $\mathcal{C}$ and is in one of the following forms. 

\begin{enumerate}
\item \label{tree-form:1} It is a single atomic formula $R(\tuple{x})$ with $r \in \scope(\tuple x)$. 
\item \label{tree-form:2} It is a conjunction of tree \efpp-formulae $\phi_1, \ldots, \phi_m$ 
with the roots   $r_1, \ldots, r_m$ and a new copy $\phi_0 := R(\tuple x)$  of a constraint in $\constraints$ such that 
\begin{enumerate}
\item $\bigcup_{i \in \{0, \ldots, m \}} \{ r_i \} \subseteq \scope(\tuple x)$, 
\item $\Var(\phi_0) \cap \Var(\phi_i) = \{ r_i \}$ for all $i \in [m]$, and 
\item $\Var(\phi_i) \cap \Var(\phi_j) = \{ r_i \} \cap \{r_j \}$ for all $i,j \in [m]$.
\end{enumerate}
\end{enumerate}

The subformulae $\phi_1, \ldots, \phi_m$ of $\phi$ in Item~\ref{tree-form:2} will be called the \emph{leaf subformulae} of $\phi$ and $\phi_0$ the \emph{root subformula} of $\phi$.
A \emph{formula path} in $\phi$ is a sequence  $\psi_1, \ldots, \psi_{t+1}$ of subformulae of $\phi$ such that $\psi_{t+1}$ is $\phi$, $\psi_1$ is an atomic formula and  for every $i \in [t]$,
$\psi_{i}$ is a leaf subformula of $\psi_{i+1}$.

A tree \efpp-formula is a tree \efpp-formula over $\instance$ for some instance $\instance$ of $\csp(\sA)$. 
\label{def:tree-form}
\end{definition}

Observe that tree \efpp-formulae are quantifier-free.
In the following, we will be interested in $\proj_{r} \phi^{\sA}$ where $\phi$ is a tree \efpp-formula and $r$ is the root of $\phi$. 

\begin{definition}
\label{def:root-defn-finite}
We say that a tree \efpp-formula $\phi$ with the root $r$ is a \emph{minimal tree \efpp-formula over $\instance$} if $\proj_{r} \phi^{\sA} = B$ and $\phi$ is minimal with respect to the number of atoms among all tree \efpp-formulae $\psi$ over $\instance$   with the same root $r$ satisfying 
$\proj_{r} \psi^A = B$.
\end{definition}

\noindent
Note that in the above definition, we allow $B$ to be equal to $\emptyset$. We will now prove that whenever Algorithm~\ref{alg1:cap} derives some new $D_u$ over an instance $\instance$, then there is a tree \efpp-formula $\phi$ over $\instance$ with the root $r$ which is a copy of $u$  so that the projection of $\phi^{\sA}$ onto $r$ is $\{ f(u) \mid f \in D_u \}$ with  $D_u$ from $\mathcal{D}^{\V}(\instance)$.

\begin{proposition}
\label{prop:Alg1-TreeDef}
Let $\sA$ be a finite structure and $\instance$  an instance of $\csp(\sA)$. It follows that whenever Algorithm~\ref{alg1:cap} computes $D_{u}$ either in Step~1 or Step~2, then there exists a tree \efpp-formula $\phi$ over $\instance$ with the root $r$ being a copy of $u$ in $\mathcal{X}$ so that  $\proj_r (\phi)^{\sA} = \{ f(u) \mid f \in D_u \}$.
\end{proposition}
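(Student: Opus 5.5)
We argue by induction on the number of assignments to sets $D_u$ performed by Algorithm~\ref{alg1:cap}, counting the initializations of Step~1 and every update of Step~2 in the order they are executed. The invariant to carry along is this: for every variable $u \in \V$ and every moment of the run, there is a tree \efpp-formula $\phi_u$ over $\instance$ whose root is a copy of $u$ and with $\proj_{r}(\phi_u^{\sA}) = D_u$ for the current value of $D_u$. Since the copies in $\mathcal{X}$ are unlimited, we may take fresh copies of variables whenever we like. In particular, any tree \efpp-formula can be renamed so that its variables avoid a given finite set while keeping a prescribed copy of $u$ as the root.

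The base case is Step~1. There $D_u = \proj_{(u)}(C)$ for some $C \in \constraints$ with $u \in \scope(C)$. Let $R$ be the relation of $C$ and let $R(\tuple x)$ be a copy of $C$ in which the copy $r$ of $u$ occurs. By Item~\ref{tree-form:1} of Definition~\ref{def:tree-form}, this single atom is a tree \efpp-formula with root $r$. Its projection onto $r$ is exactly $D_u$.

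For the inductive step in Step~2, suppose $C \subseteq A^U$ and $u$ are chosen, so the new value is $D_u \cap \widehat{D_u}$. We may assume $u \in U$, since otherwise $\widehat{D_u}$ is empty or trivial and can be treated separately. Let $U = \{z_1, \ldots, z_m\}$ with $u = z_1$. By the induction hypothesis, each $z_i$ has a tree formula $\phi_{z_i}$ defining its current $D_{z_i}$. We rename these formulae apart: their roots become distinct copies $r_i$ of $z_i$, and all remaining variables are made pairwise disjoint. Now take a fresh copy $\phi_0 := R(\tuple x)$ of $C$ whose variables are exactly $r_1, \ldots, r_m$. Then $\phi := \phi_0 \wedge \phi_{z_1} \wedge \cdots \wedge \phi_{z_m}$ is of the form in Item~\ref{tree-form:2}, with root $r_1$. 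The three intersection conditions (a)--(c) hold because of the renaming.

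It remains to compute the projection. The subformulae $\phi_{z_i}$ share only their roots with $\phi_0$ and are otherwise disjoint, so an assignment to $r_1, \ldots, r_m$ extends to a model of $\phi$ exactly when it satisfies $\phi_0$ and each value on $r_i$ lies in $\proj_{r_i}\phi_{z_i}^{\sA} = D_{z_i}$. Hence $\proj_{r_1}\phi^{\sA}$ equals $\{g(u) \mid g \in C,\ g(z) \in D_z \text{ for all } z \in U\}$, which is $\widehat{D_u}$. Including $\phi_{z_1}$ then intersects this with the old $D_u$. Because the old $D_u$ contains every value counted in $\widehat{D_u}$, we get $\widehat{D_u}$ and $\widehat{D_u}\cap D_u$ as the same set.

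The main obstacle is the bookkeeping of variable identities. Roots of distinct leaf subformulae must be distinct variables by condition (c), even though they are copies of variables of $C$. If a variable occurred twice in a constraint scope, those repeated positions would have to share one root. The point to check carefully is that scopes are sets, so each $z_i$ receives exactly one root copy, and that this handles the degenerate cases where $D_u$ or some $D_{z_i}$ is empty. In those cases the construction still produces a formula with empty projection, as required.
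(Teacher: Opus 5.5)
Your proof is correct and follows the paper's argument: induction over the steps of Algorithm~\ref{alg1:cap}, with a single copied atom for Step~1, and for Step~2 a fresh copy $\phi_0$ of $C$ joined with renamed-apart tree formulae for all variables of its scope. You also include $\phi_0$ in the final conjunction explicitly, which the paper's $\psi:=\bigwedge_{i\in[m]}\phi_i$ omits by a typo. One small misreading: condition (c) does not force roots of distinct leaf subformulae to be distinct, since it allows $r_i=r_j$; this is harmless here because your roots are copies of distinct scope variables.
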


\begin{proof}
The proof goes by the induction of the number of steps executed by Algorithm~\ref{alg1:cap}. For any $D_u$ computed in Step~\ref{alg1list:step1} we just take an atom $R(\tuple x)$ which is a copy of the   constraint $C$ and set the root to a copy of $u$ in $\tuple x$.

Assume now that Algorithm~\ref{alg1:cap} computes $D_{u}$ in Step~2 from $D_{z_1}, \ldots, D_{z_m}$ with 
$U= \{ z_1, \ldots, z_m \}$.
By the induction hypothesis, there are $\phi_1, \ldots, \phi_{m}$ with the roots 
$r_1, \ldots, r_m$, respectively, being the copies  of $z_1, \ldots, z_m$
such that the projection of $\phi_1^{\sA}, \ldots, \phi_{m}^{\sA}$ onto $r_1, \ldots, r_m$ are the sets $\{ f(z_1) \mid f \in D_{z_1} \}, \ldots,$
$ \{ f(z_m) \mid f \in D_{z_m} \}$.
Let $\phi_0 = R(\tuple x)$ be a copy of the constraint $C \in \constraints$ of the scope $U$ used in Step~2 such that 
$\bigcup_{i \in [m]} \{ r_i \} = \scope(\tuple x)$.
Since $\mathcal{X}$ is infinite, we may secure the conditions (2a)--(2c) in Definition~\ref{def:tree-form}.
Observe now that $\psi:=\bigwedge_{i \in [m]} \phi_i$ with the root $r_j$, where $j \in [m]$ and $r_j$ is a copy of $u$,  
 is a tree \efpp-formula over $\instance$ satisfying $\proj_{r_j} \psi^{\sA} = \{ f(u) \mid f \in D_{u} \}$.
\end{proof}

\noindent
We are now ready to prove that unbalanced finite structures have finite duality. 

\begin{proposition}
\label{prop:unbalanced-finite-duality}
Let $\sA$ be a finite core structure over a finite signature. If $\sA$ is unbalanced, then it has finite duality.
\end{proposition}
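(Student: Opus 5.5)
We combine Proposition~\ref{prop:noimp1min}, Proposition~\ref{prop:minAlg1} and Proposition~\ref{prop:Alg1-TreeDef}. By Proposition~\ref{prop:noimp1min}, an instance $\instance$ of $\csp(\sA)$ has no solution iff the $(1)$-minimality algorithm makes it trivial. By Proposition~\ref{prop:minAlg1}, this happens iff Algorithm~\ref{alg1:cap} derives $D_u=\emptyset$ for some $u$. By Proposition~\ref{prop:Alg1-TreeDef}, this happens iff some tree \efpp-formula $\phi$ over $\instance$ has $\proj_r\phi^{\sA}=\emptyset$, i.e.\ $\phi^{\sA}=\emptyset$. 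Such a $\phi$, read as a canonical structure $\sD_\phi$ on $\Var(\phi)$, maps homomorphically to $\instance$ by sending each copy to its original variable, and $\sD_\phi$ has no homomorphism to $\sA$. Conversely, if some $\sD$ with no homomorphism to $\sA$ maps to $\instance$, then $\instance$ is unsolvable. Hence the set of all such ``empty'' tree formulae is an obstruction set. It remains to show that finitely many of them suffice, i.e.\ that every one of them admits a homomorphic image of one from a bounded-size family. I will take the family of \emph{minimal} tree \efpp-formulae defining the empty set, minimal in the sense of Definition~\ref{def:root-defn-finite}, and show that their size is bounded by a function of $\sA$ alone. Since the signature is finite, this leaves finitely many structures up to isomorphism.

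The size bound splits into two parts: bounding the branching and bounding the depth. For the branching, each root subformula is a single atom of arity at most $\text{arity}(\sA)$, so each node has at most $\text{arity}(\sA)$ leaf subformulae, after merging those sharing a root if needed. For the depth, take a minimal $\phi$ and a long formula path $\psi_1,\dots,\psi_{t+1}$. For each $\psi_i$, record the pair $(\proj_{r_i}\psi_i^{\sA}, S_i)$, where $r_i$ is its root and $S_i$ is the set of its root-projection values realizable together with the part of the tree outside it. Minimality forces strict changes along the path: if two nodes $i<j$ on the path carried the same root projection, we could replace $\psi_j$ by $\psi_i$ (renaming the root), obtaining a smaller tree formula with the same root projection, contradicting minimality. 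Since there are only $2^{|A|}$ subsets of $A$, the depth would then be bounded by $2^{|A|}$.

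The main obstacle is that this naive replacement is unsound. Shrinking along the path is only guaranteed for the root projection of the whole formula. Inner nodes may have projections that grow or wander, because the sibling subtrees contribute restrictions that do not factor through the root. Here unbalancedness must enter. Fix a long path with roots $r_1,\dots,r_{t+1}$, and consider the \efpp-formula obtained from $\phi$ by deleting the subtree below $r_i$ and taking $r_i$ and the top root as free variables. Set $C_i:=\proj_{r_i}\psi_i^{\sA}$. Between consecutive path nodes, the formula $\phi$ with the subtree below $r_{i}$ removed then acts as a $(C_i,r_i,C_{i+1},r_{i+1})$-implication whenever $C_{i+1}$ is strictly smaller than the full projection, exactly as in the instance graph $\mathcal{G}_\instance$ in the proof of Proposition~\ref{prop:noimp1min}. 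With enough depth, pigeonhole on the pair (projection of the path segment, $C_i$), which has at most $2^{|A|}\cdot 2^{|A|}$ values, gives $i<j$ with identical data. Composing the implications between them via Definition~\ref{def:composition-fin} would yield a balanced implication, a contradiction, unless the segment is trivial (no strict containment), in which case that segment can be cut out, contradicting minimality. I expect verifying the implication conditions, especially condition~4, after cutting to be the delicate point. I would handle it by taking the projection sets relative to the surrounding formula rather than the subtree alone. The result is a depth bound of roughly $4^{|A|}$, hence a bounded size, hence finite duality.
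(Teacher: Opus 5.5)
Your outer architecture matches the paper's: minimal empty tree \efpp-formulae serve as obstructions, branching is bounded, and depth is bounded by extracting a balanced implication from a long formula path. The gap is that the decisive step, actually producing the balanced implication, is never carried out, and the mechanism you sketch does not work.

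\begin{itemize}
\item \emph{Your named formula.} Take $\rho_i$ to be $\phi$ with the subtree $\psi_i$ deleted. Condition~4 fails outright here. Since $\psi_i$ meets $\rho_i$ only in $r_i$, any $f\in\rho_i^{\sA}$ with $f(r_i)\in C_i=\proj_{r_i}\psi_i^{\sA}$ would extend to an element of $\phi^{\sA}=\emptyset$.
\item \emph{Local pieces and composition.} Using the root atom of $\psi_{i+1}$ plus the siblings of $\psi_i$ instead does give $(C_i,r_i,C_{i+1},r_{i+1})$-implications. But Definition~\ref{def:composition-fin} only composes when $\proj_{(v^1)}\phi_1^{\sA}=\proj_{(u^2)}\phi_2^{\sA}$. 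In Proposition~\ref{prop:noimp1min} this came from $(1)$-minimality; here consecutive pieces have unrelated projections onto the shared variable $r_{i+1}$.
\item \emph{The segment directly.} Let $\theta_{ij}$ be $\psi_j$ with $\psi_i$ deleted. When $r_i\neq r_j$, minimality does give conditions~1--4 for $(C_i,r_i,C_j,r_j)$. Balancedness, however, also needs $\proj_{r_i}\theta_{ij}^{\sA}=\proj_{r_j}\theta_{ij}^{\sA}$. That is a property of the pair $(i,j)$, which pigeonhole on single-node data cannot deliver, and your ``projection of the path segment'' is never specified as such data.
\item \emph{Repeated roots.} Pigeonhole may return $r_i=r_j$, since roots repeat along a path, and an implication requires $u\neq v$. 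You do not address this.
\item \emph{Diagnosis of naive splicing.} Replacing $\psi_j$ by a formula with the same root projection is harmless, because $\psi_j$ meets the rest only in $r_j$. The real problem is that $r_i$ and $r_j$ may be copies of different variables of $\instance$, so the spliced formula is no longer over $\instance$.
\end{itemize}

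The missing idea is to read all data off one fixed formula.

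\begin{itemize}
\item Choose the empty formula so that no proper tree subformula is empty. Pass to a leaf subformula $\phi$, so $\phi^{\sA}\neq\emptyset$.
\item Take a longest path $\psi_1,\dots,\psi_{d-1}$ in $\phi$, and let $\xi$ be $\phi$ with the bottom atom $\psi_1$ deleted.
\item Attach to each root the pair of \emph{global} projections $(C_i,C'_i)=(\proj_{r_i}\phi^{\sA},\proj_{r_i}\xi^{\sA})$.
\item Minimality gives $C_{d-1}\subsetneq C'_{d-1}$. Because $\psi_i$ meets the rest of $\phi$ only in $r_i$, an equality $C_i=C'_i$ would propagate up to $r_{d-1}$. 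Hence $C_i\subsetneq C'_i$ for all $i$.
\item Runs of a repeated root have length at most $|A|$ by minimality; there splicing is legitimate, since the variable is the same. So for $d$ of order $4^{|A|}\cdot|A|$, pigeonhole yields $r_i\neq r_j$ with $(C_i,C'_i)=(C_j,C'_j)$.
\end{itemize}

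Then $\xi$ itself, with $u=r_i$ and $v=r_j$, is a balanced $(C_i,r_i,C_i,r_j)$-implication, and no composition is needed.
\begin{itemize}
\item Condition~3 follows by gluing an element of $\psi_i^{\sA}$ at $r_i$.
\item Condition~4 follows by restricting elements of $\phi^{\sA}$.
\item Balancedness is automatic, because both projections are taken in the same formula $\xi$.
\end{itemize}

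A minor further point: leaf subformulae sharing a root cannot be merged within Definition~\ref{def:tree-form}. The correct branching bound is at most $|A|$ leaves per variable of the root atom, again by minimality.
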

\begin{proof}
By  Propositions~\ref{prop:noimp1min}, \ref{prop:minAlg1}, and~\ref{prop:Alg1-TreeDef}, for every unsatisfiable instance $\instance$ of $\csp(\sA)$ 
there is 
a minimal tree \efpp-formula $\phi_{\instance}$ over $\instance$ such that the projection of $\phi_{\instance}^{\sA}$ onto its root is $\emptyset$. Without loss of generality, we assume that no proper subformula of $\phi_{\instance}$ that is a tree \efpp-formula has this property.

Let $\Phi'_{\sA}$ be the set of all formulae $\phi_{\instance}$ over all unsatisfiable 
instances $\instance$ of $\csp(\sA)$ and $\Phi_{\sA}$ be a maximal subset of $\Phi'_{\sA}$ such that no two formulae in $\Phi_{\sA}$ may be obtained from each other by renaming the variables.

Assume first that every formula in $\Phi_{\sA}$ has the longest formula path shorter than $d = (2^{\left| A \right|})^2 \cdot |A| + 2$. We will show that in this case $\Phi_{\sA}$ is of finite size. By the induction of $l \leq d$, we will show that there are only finitely many minimal tree \efpp-formulae of the longest formula path shorter than or equal to $l$ up to the renaming of variables. Since the signature of $\sA$ is finite, there are clearly only finitely many different atoms up to the renaming of the variables. All other tree \efpp-formulae are constructed according to Item~\ref{tree-form:2} and hence their longest formula paths are of length strictly greater than $1$. We may therefore assume that the induction hypothesis holds for $l$ and we will prove that it also holds for $(l+1)$.
Consider a tree \efpp-formula composed out of an atom $\phi_0 := R(\tuple x)$ and tree
\efpp-formulae $\phi_1, \ldots, \phi_m$ with the longest formula paths shorter than or equal to $l$.
Since the number of different values in $\sA$ is $|A|$ and  $\phi$ is supposed to be minimal, we
observe that $m \leq |A| \cdot |\scope(x)|$. By the induction hypothesis, there are only finitely many possible $\phi_1, \ldots, 
\phi_m$ up to the renaming of the variables. Since the signature of $\sA$ is finite, we have that there are only finitely many tree \efpp-formulae of the longest formula path shorter than or equal to $l+1$, up to the renaming of the variables. It proves the induction step and completes the proof of the claim.  It follows that $\Phi_{\sA}$ is finite.

We write $\sD_{\phi}$ for the canonical structure  of a tree \efpp-formula $\phi$, that is, a structure whose domain is the set of variables in $\phi$  
and $R^{\sD_{\phi}}$ has a tuple $\tuple{x}$ if and only, if $\phi$ contains an atom $R(\tuple {x})$.

In order to complete the proof of the first case, we will show that 
$\mathcal{O} = \{ \sD_{\phi} \mid \phi \in \Phi_{\sA} \}$ is an obstruction set for $\sA$. Assume first that $\instance$ is unsatisfiable. Then, there exists a minimal  tree \efpp-formula $\phi_{\instance} \in \Phi'_{\sA}$ over $\instance$ such that the projection of
$\phi_{\instance}^{\sA}$ onto its root is $\emptyset$. The formula $\psi_{\instance}$ which is $\phi_{\instance}$ with the variables renamed is therefore in $\Phi_{\sA}$. The canonical structure $\sD_{\psi}$ maps to $\instance$, seen as the structure over domain $\V$, first by renaming the elements of its domain so that they were named as the variables of $\phi_{\instance}$ and then by taking a homomorphism sending every variable in $\phi_{\instance}$, which is a copy of a variable in $\instance$, to the original variable. It follows that if $\instance$ is unsatisfiable, then there is $\sD \in \mathcal{O}$ that maps homomorphically to $\instance$.
On the other hand, the projection of $\phi^{\sA}$ onto its root is $\emptyset$ and hence $\sD_{\phi}$ does not map to $\sA$. Thus, if $\sD_{\phi}$ maps to $\instance$, then it has no solution. It completes the proof of the fact that $\mathcal{O}$ defined above is a finite obstruction set for $\sA$.

To complete the proof of the proposition, we will show that whenever $\Phi_{\sA}$ contains a tree \efpp-formula of a longest formula path longer than $d = (2^{\left| A \right|})^2 \cdot |A| +2$, then there is a balanced implication in $\sA$. Suppose that $\phi'$ is such a tree \efpp-formula and that the projection of $\phi'^{\sA}$ onto its root $r$ is  $\emptyset$. By the assumption made above, the projection of $\phi^{\sA}$ onto its root is a non-empty set for every leaf subformula $\phi$ of $\phi'$. Take such a leaf subformula $\phi$ of $\phi'$ with a longest formula path $\pi$ consisting of $(d-1)$ tree \efpp-formulae: $\psi_1, \ldots, \psi_{d-1}$ with roots $r_1, \ldots, r_{d-1}$, respectively, such that $\psi_1$ is an atomic formula and $\psi_{d-1}$ is $\phi$. Let $\xi$ be $\phi$ with $\psi_1$ removed from it.
Since $\phi$ is minimal, it follows that  the projection $C_{d-1}$ of $\phi^{\sA}$ onto $r_{d-1}$ is strictly contained in 
the projection  $C'_{d-1}$  of $\xi^{\sA}$ onto $r_{d-1}$. In fact, it follows that the projection $C_{i}$ of $\phi^{\sA}$ onto $r_i$ is strictly contained in the projection  $C'_i$ of $\xi^{\sA}$ onto $r_i$ for every $i \in [d-1]$. The roots $r_1, \ldots, r_{d-1}$ do not have to be pairwise different but by the construction of a tree \efpp-formula presented in Definition~\ref{def:tree-form}, we have that if $r_i$ is the same variable as $r_j$ for some $i,j \in [d-1]$, then  $r_j$ is $r_k$ for all $k \in [i,j]$. Since $\phi$ is minimal,
the interval $[i,j]$ cannot contain more than $|A|$ numbers. By the counting argument, there are different roots $r_i, r_j$ with $i < j$ in $[d-1]$ such that $(C_i, C'_{i})$ equals $(C_j, C'_j)$.
Observe that  $\xi$  is a balanced $(C_i, r_i, C_i, r_j)$-implication.
It completes the proof of the proposition.
\end{proof}

\noindent
We are now ready to reprove the Larose-Tesson theorem using the results provided in this section.

\larosetesson*
\begin{proof}
Without loss of generality, we may assume that $\sA$ is a core. If there is a balanced implication in $\sA$, then by
Proposition~\ref{prop:balanced-finite-hard}, we have that $\csp(\sA)$ is L-hard. Otherwise, by~\ref{prop:unbalanced-finite-duality} the structure $\sA$ has finite duality. Hence, $\csp(\sA)$ is first-order definable and it is in non-uniform AC$^0$.
\end{proof}

\section{Infinite CSPs}

\subsection{Hardness results}

\section{Summary and Future Work}
\label{sect:futurework}
The main contribution of this paper is the dichotomy between non-uniform AC$^{0}$ and L-hard for all finite-signature first-order expansions of finitely bounded homogeneous structures --- Theorem~\ref{thm:main}. In order to obtain the result, we first provided a brand new proof of Theorem~\ref{thm:LaroseTesson}, and then generalized this new proof to infinite structures.
We believe that the methods provided may be used further, in particular, to approach the following research question.

\begin{rquestion}
\label{rquestion}
Is it true that every first-order reduct (expansion) of a $k$-homogeneous $\ell$-bounded structure (model-complete core) of relational width $(k,\ell)$ gives rise to CSPs that are in NL or P-complete?
\end{rquestion}

This question is interesting in its own right --- a number of infinite-domain CSPs are solvable by local consistency methods~\cite{BodirskyDalmau}, but it is also directly connected to an open question over finite-domain CSPs. 
Indeed,  it is still not known which finite-domain CSPs are P-complete, and which are in NL. In particular, we do not even know it for problems solvable by the $1$-minimality algorithm. On one hand, in this class of problems, we have the P-complete Horn-3SAT, which is $\csp(\{0,1\}; R_0, R_1, R_{\text{Horn-3SAT}})$ with  $R_0 = \{ (0)\}$, $R_1 = \{ (1)\}$ and
$R_{\text{Horn-3SAT}} = \{ (x,y,z) \in \{0,1 \}^3 \mid (x \wedge y \implies z)\}$, and on the other hand, for instance, structures with finite duality studied in this paper, which clearly give rise to CSPs in NL.  We will now try to convince the reader that the approach in this paper is plausible to generalize to solve this problem and its infinite counterpart. Indeed, we already start with structures solvable by the $(1)$-minimality or the $(k,l)$-minimality algorithm in the case of infinite structures, and hence we would not need to prove the counterparts of Proposition~\ref{prop:noimp1min} or Proposition~\ref{prop:widthunbalanced}, respectively. But we would definitely need different hardness criteria, this time, P-hardness criteria. Perhaps, an implication which resembles a formula of the form $A(x) \implies A(y)$ could have been replaced by a relation more similar to $(A(x) \wedge A(y) \implies A(z))$ that looks like Horn-SAT --- a canonical P-complete problem. %, but over non-Boolean domains. 
One can imagine (we do not know how at the moment) that such a relation may be obtained from a tree or a $k$-tree \efpp-formula that looks like a large enough binary tree, similarly, as to how we obtained balanced implications from a high enough tree or a $k$-tree \efpp-formulae in the proofs of Proposition~\ref{prop:unbalanced-finite-duality} and Proposition~\ref {prop:noimpfindual}, respectively. If all tree or $k$-tree formulae in $\sA$ originating from unsatisfiable instances omit large binary trees, then likely they have bounded pathwidth duality, which implies that the corresponding CSPs are in NL~\cite{DalmauPathWidth}. What we just presented is definitely not a solution to the Research Question~\ref{rquestion}, but it is not impossible that, in this way or another, the proofs in this paper may influence a solution to this important problem.

\bibliographystyle{alpha}
\bibliography{references}

\end{document}